\newif\ifproofs\proofsfalse
\newtheorem{thm}{Theorem}
\newtheorem{definition}{Definition}
\newtheorem{cor}{Corollary}
\newtheorem{clm}{Claim}
\newtheorem{assumption}{Assumption}
\algnewcommand\algorithmicinput{\textbf{Input:}}
\algnewcommand\INPUT{\item[\algorithmicinput]}
\algnewcommand\algorithmicoutput{\textbf{Output:}}
\algnewcommand\OUTPUT{\item[\algorithmicoutput]}
\DeclareMathOperator*{\Var}{Var}
\renewcommand{\epsilon}{\varepsilon}
\newcommand{\w}{\bold{w}}
\newcommand{\outopt}{o}
\newcommand{\tw}{\tilde{w}}
\newcommand{\htau}{\hat{\tau}}
\newcommand{\he}[1]{\textcolor{blue}{[Hadi: #1]}}
\newcommand{\vg}[1]{\textcolor{blue}{[Vasilis: #1]}}
\newcommand{\jz}[1]{\textcolor{red}{[Juba: #1]}}
\newcommand{\ep}[1]{\textcolor{green}{[Manolis: #1]}}
\renewcommand{\he}[1]{}
\renewcommand{\vg}[1]{}
\renewcommand{\jz}[1]{}
\renewcommand{\ep}[1]{}
\title{Optimal Data Acquisition with Privacy-Aware Agents}
\author[1]{Rachel Cummings}
\author[2]{Hadi Elzayn}
\author[3]{Vasilis Gkatzelis}
\author[3]{\\Emmanouil Pountourakis}
\author[4]{Juba Ziani}
\affil[1]{Columbia University}
\affil[2]{Meta}
\affil[3]{Drexel University}
\affil[4]{Georgia Institute of Technology}
\begin{document}

\maketitle

\begin{abstract}
We study the problem faced by a data analyst or platform that wishes to collect private data from privacy-aware agents. To incentivize participation, in exchange for this data, the platform provides a service to the agents in the form of a statistic computed using all agents' submitted data. 
The agents decide whether to join the platform (and truthfully reveal their data) or not participate by considering both the privacy costs of joining and the benefit they get from obtaining the statistic. The platform must ensure the statistic is computed differentially privately and chooses a central level of noise to add to the computation, but can also induce personalized privacy levels (or costs) by giving different weights to different agents in the computation as a function of their heterogeneous privacy preferences (which are known to the platform). We assume the platform aims to optimize the accuracy of the statistic, and must pick the privacy level of each agent to trade-off between i) incentivizing more participation and ii) adding less noise to the estimate. 

We provide a semi-closed form characterization of the optimal choice of agent weights for the platform in two variants of our model. In both of these models, we identify a common nontrivial structure in the platform's optimal solution: an instance-specific number of agents with the least stringent privacy requirements are pooled together and given the same weight, while the weights of the remaining agents decrease as a function of the strength of their privacy requirement. We also provide algorithmic results on how to find the optimal value of the noise parameter used by the platform and of the weights given to the agents. 
\end{abstract}

\section{Introduction}

Recent advancements in machine learning algorithms and large-scale computation has reaffirmed the crucial value of information, leading to unprecedented levels of data gathering. For example, the recommendation systems used by platforms like Netflix, TikTok, or YouTube are trained on massive amounts of data regarding user behavior and preferences. However, this accumulation of information has raised important concerns regarding the privacy costs suffered by the users that this information pertains to. To mitigate this issue, a large body of research has focused on designing algorithms that process the sensitive user information while limiting their incurred privacy costs (a prominent line of such work focuses on \emph{differential privacy}). The main limitation of this approach is that the reduced privacy costs often come at the expense of lower quality outcomes (e.g., a recommendation system with very strong privacy guarantees may yield poor recommendations), hurting the same users that it is aiming to protect. Our goal in this paper is to develop a better understanding of the trade-offs that such users face between privacy costs and the resulting quality of service, and to design optimal data acquisition mechanisms which respect the users' preferences.


The privacy cost that agents may suffer by releasing access to their data is well-studied: e.g., users in online platforms and social media applications may not want to reveal their search and watch histories or content preferences. In response to these concerns, many platforms allow their users to opt out of sharing their data (e.g., a YouTube user can opt out of letting the platform track their activity). Another common example where collecting sensitive information may be really valuable are medical studies aimed at developing a better understanding of some rare disease. Clearly, an individual that has this disease may be reluctant to share this information, and prefer not to participate in the study.

What may not be as well-understood in these examples is the impact of the potential non-monetary benefits that the agents can accrue by contributing their data: 
e.g., a prime motivation to participate in a medical study about a rare disease is the hope that it may lead to new treatments, which would directly benefit those who suffer from it. Similarly, by revealing their content preferences, users of platforms like Netflix or Youtube can help these platforms improve their recommendation engines which, in turn, provides the users with a higher quality of service. In general, the more significant the potential benefits are, the less reluctant the agents are to share their data. Furthermore, it is often the case that these benefits increase (and privacy costs drop) when more users participate, giving rise to interesting and, to the best of our knowledge, less well-understood complementarity phenomena across users.

Our goal in this paper is to model and analyze such settings in which agents can decide whether or not to release access to their data, by considering both the benefit they would obtain and the privacy losses they would incur. Specifically, we approach this problem from the perspective of a platform whose goal is to maximize the value of the final computation (e.g., the quality of service of a system or the accuracy of a study), while respecting the preferences of the agents. To achieve this goal, the platform can determine the extent to which it will introduce differential privacy protections, taking into consideration the agents' preferences and aiming to incentivize their participation.

\subsection{Summary of contributions} 

In this paper, for simplicity, we consider a learner (e.g., a platform) that is interested in performing a simple task: estimating the mean of a population distribution. The learner controls two types of variables: i) the amount of noise $\eta$ centrally added to the computation for differential privacy, and ii) the weights $w_1, \ldots, w_n$ given to the data of agents $1, \ldots, n$ in the learner's computation. By giving different weights to different agents, the platform can provide personalized privacy levels to agents with varying privacy attitudes; see the preliminary Section~\ref{sec:prelims} for more details on how the privacy level obtained by an agent $i$ depends on $\eta$ and $w_i$. We are interested in designing the weights $\vec{w}$ and the noise $\eta$ to optimize the accuracy of the learner's statistic. 

Ultimately, the accuracy of this statistic will depend on the participation decisions of the agents: as more agents participate, the learner is able to collect more data and to refine his statistic. The first main contribution of our paper is to propose two potential models of how agents decide whether to participate in the platform:
\begin{itemize}
    \item In Section~\ref{sec:og_model_description}, we introduce the ``quasi-linear'' agent model. In this model, an agent explicitly trades-off the privacy losses they incur with the benefit they get from the platform. They only decide to participate in the platform if the anticipated benefit is higher than the cost for sharing their data. A version of this model is also used in the follow-up work of~\citet{ec2022}. 
    \item In Section~\ref{sec: variant_model_description}, we introduce a simpler variant of our model, called the ``privacy-constrained'' model. In this model, an agent is willing to join the platform as long as i) they get some benefit from it and ii) a minimum privacy requirement (that may be different for different agents) is met. 
\end{itemize}

We then proceed to characterizing the optimal choice of estimator (i.e. of weights $\vec{w}$ and noise parameter $\eta$) in semi closed-form:
\begin{itemize}
\item In Section~\ref{sec:og_model}, we do so for the quasi-linear agent participation model. There, we remark that the optimal solution has a non-trivial structure, similar to that of~\citet{acq2}: namely, the agents with the least stringent privacy requirements are pooled together and given the same weight, while agents with higher privacy requirements are given weights that decrease with the strength of their privacy attitudes. We also provide algorithmic guidance on how to find the optimal value of $\eta$. We note that how to elicit the agents' privacy costs when they are strategic and can misreport these costs is studied in~\citet{ec2022}. 
\item In Section~\ref{sec:variant_model}, we show that a similar structure (a pooling region followed by a decreasing weight as privacy attitudes become more stringent) arises in the alternative, privacy-constrained model. We provide expressions for both $w_1, \ldots, w_n$ and $\eta$ nearly in closed-form, up to a single unknown parameter $t$ which controls the number of agents that are pooled together. We also remark that this variant of the model has simple incentive properties: it is in the agents' best interest to report their privacy costs truthfully, even without interventions or payments by the learner.
\end{itemize}

\section{Related work}

Recently, there has been a lot of interest in the study of data transactions in the computer science, operations research, and economics literatures. For example,~\citet{market1,market2} study how to model and design data markets. 

Much of the literature aims to tackle one major building block for data transactions: deciding how to efficiently and optimally acquire data from a collection of agents (or ``data providers''). The main focus of this literature is on settings in which the data providers must be compensated for their data. For example, \citet{acq1,acq2,acq3,acq4,acq5,acq6,acq7,acq8} look at the pricing and purchase of such data when the provided data is verifiable (but providers may be strategic and lie about their costs for revealing their data). There is also a significant line of work--such as~\citet{nonver1,nonver2,nonver3,nonver4,nonver5,nonver6,nonver7,nonver8,nonver9}--on the case of non-verifiable data points, where providers can also lie about their data in order to steer the learner's model towards desired outcomes. 

A significant part of this literature singles out privacy loss as the main reason why data providers must be compensated for their data. This gave rise to a body of work that focuses on data acquisition under differential privacy constraints, e.g. \citet{dp1,dp2,dp3,dp4,dp5,dp6,dp7,dp8}. I.e., the seller must provide formal privacy guarantees on how the providers' data is used, while often still compensating them for any remaining privacy losses. This is where our work lies; we adopt the same point of view as~\citet{dp2,dp8} in that we consider settings in which agents have an inherent interest in the statistic or service offered by the platform that is trained on their data, rather than solely in the payments they receive from the platform.

One of the main, salient elements of our model is that the quality of the estimate or service provided by the platform depends not only on the privacy level that the platform offers, but also on the number of providers that join the platform and report their data. In turn, the agents' participation decisions are an endogenous aspect of our model, as in the works of \citet{arpita-katrina} and \citet{ec2022}. Similarly to our setting, \citet{arpita-katrina} consider a setting in which the privacy cost a data provider incurs depends on other providers' participation decision; the main distinction compared to our work is that in \citet{arpita-katrina}, agents only care about how much privacy they obtain, not on how the collected data is used by the buyer or platform to offer a useful service or statistic in return. The work of~\citet{ec2022}, follows-up on an earlier working version of our model~\cite{INFORMS-talk}, and extends it 
to also consider settings where agents may strategically misreport their privacy costs. Specifically, they consider a data acquisition setting with verifiable data where agents obtain a benefit that depends on the accuracy of the platform's model, rather than only from payments they get from their data. 
We remark, however, that our results are mostly orthogonal to theirs. They focus on designing algorithms for solving the mechanism design problem, i.e., how to incentivize truthful cost reporting while optimizing the accuracy of the platform's estimate. On the other hand, we focus on characterizing how the optimal estimator for the platform should weigh each agent's data in semi- closed form, as a function of their privacy preferences. We also incorporate several additional modeling elements relative to~\citet{ec2022}: in particular, i) we assume that each agent may benefit from the platform's estimation in a possibly non-linear way, and ii) we provide a second, alternative model of agents' privacy preferences and of how they decide to participate in the platform.

\section{Differential Privacy Preliminaries}
In this paper, we focus on \emph{differential privacy} as our main privacy technique. Differential privacy was first introduced in the seminal work of~\citet{dp_og} and aims to prevent an attacker from being able to infer an agent's data by observing or post-processing the output of an algorithm, e.g. the output of a learner's statistical computation or machine learning model. In this section, we focus on presenting the minimal knowledge of differential privacy needed for this paper; for a more detailed discussion of differential privacy, please refer to~\citet{dworkroth}.

Differential privacy protects an agent's data by comparing two possible worlds for each agent; the difference between these two worlds is that they consider two possible different values for the data of this agent. Differential privacy requires that one (almost) cannot distinguish between these two worlds by looking at the (distribution over) outputs of the learner's computation; i.e., one  cannot tell with any reasonable certainty what the data point of the agent was, since the outcome of the computation (nearly) does not depend on its value. Formally, a learner runs a computation of mechanism $\mathcal{M}$ which takes a dataset $x$ as an input, and outputs some function or property $\mathcal{M}(x)$ of that dataset. Given $n$ agents whose data is used in the learner's mechanism, one can think of a dataset $x$ as a vector of entries $(x_1,\ldots,x_n)$, where $x_i$ is the data of agent $i$. We first introduce the definition of neighboring datasets:

\begin{definition}
Two datasets $x$ and $x'$ are neighboring with respect to agent $i$ (or ``$i$-neighbors'') if they differ only in agent $i$'s data. I.e., $x_j = x_j'$ for all $j \neq i$.
\end{definition}

Differential privacy, as informally described above, requires that the outputs of mechanism $\mathcal{M}$ differ little on any two neighboring databases $x$ and $x'$. This is formalized as follows:
\begin{definition}[$\varepsilon$-differential privacy]
Let $\varepsilon > 0$. A randomized algorithm $\mathcal{M}$ is $\varepsilon$-differentially private with respect to agent $i$ if for any outcome set $O \subset Range\left(\mathcal{M}\right)$ and for all neighboring databases $x, x'$ with respect to $i$,
\[
\Pr \left[\mathcal{M}(x) \in O \right] \leq \exp(\varepsilon) \Pr \left[\mathcal{M}(x') \in O \right].
\]
\end{definition}
Here, the parameter $\varepsilon$ controls how much privacy each agent gets. As $\varepsilon$ decreases, $\exp(\varepsilon)$ also decreases and the above constraint becomes more and more stringent, improving the level of privacy guaranteed by the mechanism. For $\varepsilon = 0$, it in fact requires that $\Pr \left[\mathcal{M}(x) = o \right] = \Pr \left[\mathcal{M}(x') = o \right]$; i.e., the outcome of the mechanism is independent of the input data and thus perfectly preserves privacy. As $\varepsilon \to +\infty$, the above constraint is trivially satisfied by any mechanism and no privacy protection is provided. 

One of the simplest way to answer a desired statistical query in a differentially private manner is to add noise to the output of said query. Intuitively, as the amount of added noise \emph{increases}, the dependency of the result of said query on any particular agent's data \emph{decreases} (equivalently, the level of privacy obtained by agents \emph{increases}). The most basic and common mechanism to obtain differential privacy answers to numerical queries is the Laplace mechanism, which adds Laplace noise to the output of a query. 

\begin{definition}
Let $q$ be a numerical query, i.e. $q(x) \in \mathbb{R}$ for all $x$. The Laplace mechanism is defined as
\[
\mathcal{M}_L(x,q,\eta) = q(x) + Z,
\]
where $Z$ is a random variable drawn from the Laplace distribution with parameter $\eta$.
\end{definition}

The level of privacy obtained by the Laplace mechanism depends on the sensitivity of the query we aim to answer; i.e., how much the value of this query changes when the data entry of a single agent in the database changes. Formally, the sensitivity of a query with respect to agent $i$ is defined as 
\[
\left(\Delta q\right)_i = \max_{x,x'~\text{i-neighbors}} |q(x) - q(x')|.
\]
We then have the following privacy guarantee for agent $i$:
\begin{definition}
$\mathcal{M}_L(x,q,\varepsilon) = q(x) + Z$ is $\eta \left(\Delta q\right)_i$-differentially private with respect to agent $i$.
\end{definition}

Finally, we note that our goal is to both provide individual agents reporting their data with privacy guarantees while at the same time obtaining an accurate estimate of the statistic we are interested in. Because we consider unbiased estimators in this paper, the accuracy of said estimator is directly linked to its variance. The variance of the Laplace mechanism with parameter $\eta$ on query $q$ is given by
\[
\Var \left(q(x) + Z\right) = \Var{}_x \left(q(x)\right) + \frac{2}{\eta^2}. 
\]

\label{sec:prelims}

\section{Model}

We model a setting in which a data analyst or platform aims to incentivize privacy-aware agents to share their data with or join the platform, then collects their data and uses it to estimate a statistic. To incentivize agent participation, the platform simultaneously aims to provide privacy guarantees to agents who join the platform  while also offering a useful service to the agents who join  through their machine learning model. E.g., the machine learning model could be a platform's recommendation system, such as the ones offered by platforms such as YouTube and TikTok; it could also be the product of a medical study on a rare disease, where the individuals contribute their sensitive medical data to a study in the hopes of getting better treatments and medical outcomes in return.

The platform faces a population of $n$ agents. Each agent has a private data point $d_i$. The data points are drawn i.i.d. from an unknown distribution with unknown mean $\mu$ but known variance $\sigma^2$. Each agent also has a linear privacy cost function given by $c_i \varepsilon_i$, where $c_i \geq 0$ is an agent-specific scalar and $\epsilon_i > 0$ is the level of differential privacy obtain by agent $i$ if he joins the platform; this linearity assumption follows that of \citet{dp5}. 

The goal of the platform is to i) incentivize agents to join the platform, then ii) compute an unbiased estimator $\hat{\mu}$ of $\mu$\footnote{We see this simple estimation task as a possible proxy for more complex machine learning tasks (such as training a recommendation system), and leave the study of such tasks to future work.}. The platform wants this estimator to be as accurate as possible. Letting $S \in [n]$ be the set of agents that decide to join the platform, we assume that the platform's estimator is linear, i.e. given by
\[
\hat{\mu}(S,\w,\eta) = \sum_{i \in S} w_i d_i + Z(\eta),
\]
where $w_i$ is the weight assigned to the data of agent $i$ and $Z$ is a random variable drawn from a Laplace distribution with parameter $\eta \geq 0$ for privacy. We denote as $\w$ the vector of all $w_i$'s. Since we require our estimator to be unbiased, we assume that $w_i \geq 0$ for all $i \in [n]$ and that $\sum_{i \in S} w_i =1$. The platform optimizes over both the choice of weights $\{w_i\}_{i \in S}$ and of noise parameter $\eta$. 

Because the estimator used by the platform is unbiased, we can measure its performance (here, its expected mean-squared error) through its variance. The variance of $\hat{\mu}$, as per preliminary section~\ref{sec:prelims}, is given by
\[
\Var(\hat{\mu}) = \sum_{i \in S} w_i^2 \sigma^2 + \frac{2}{\eta^2}.
\]

The order of operations is then the following:
\begin{enumerate}
\item The analyst announces the weight vector $\w$ and the noise parameter $\eta$ that she will use in the computation.
\item Each agent $i$ decides whether he wants to participate given $\w, \eta$.
\item The analyst computes the estimator $\hat{\mu}$ on the participating agents.
\end{enumerate}

We propose two variants on how we model agents' privacy attitudes, utilities, and participation decisions. In the ``quasi-linear agent'' model, agents maximize a quasi-linear utility functions that trades-off the quality of the final model and their privacy costs. In the ``privacy-constrained agent'' model, agents aim to maximize the utility they get from the platform's model under a constraint that their privacy is not violated by more than a desired tolerance.

\subsection{The Quasi-Linear Agent Model}\label{sec:og_model_description}

In the quasi-linear model, agent $i$ has a quasi linear utility for participating in the platform, which trades-off his privacy cost for reporting his data and his utility from the platform's estimation. Noting that the sensitivity of  estimator $\hat{\mu}$ with respect to agent $i$ is given by $w_i$, the level of privacy obtained by agent $i$ is given by $\epsilon_i = w_i \eta$ (as discussed in Section~\ref{sec:prelims}), and $i$ incurs cost $c_i w_i \eta$ for participating in the platform. In turn, we consider the following quasi-linear utility for the agent: 
\[
u_i(\w,\eta) = f \left(\sigma^2 \sum_{i \in S} w_i^2 + \frac{2}{\eta^2}\right) - c_i w_i \eta
\]
for some decreasing function $f$; i.e., agent $i$'s utility increases when the variance of the platform's model decreases, and when his privacy cost decreases. If the agent does not join the platform, we assume they have access to an outside option with utility $\outopt$ (for example, they could use their own data point as an estimate). The agent then decides to participate if and only if
\[
f \left(\sigma^2 \sum_{i \in S} w_i^2 + \frac{2}{\eta^2}\right) - c_i w_i \eta \geq \outopt.
\]
The platform then aims to solve the following optimization problem:
\begin{align}
\begin{split}
\min_{\eta,\w,S}~& \sum_{i \in S} w_i^2 \sigma^2 + \frac{2}{\eta^2}
\\\text{s.t.}~& c_i w_i \eta \leq f \left(\sigma^2 \sum_{i \in S} w_i^2 + \frac{2}{\eta^2}\right) -\outopt ~\forall i \in S
\\& \sum_{i \in S} w_i = 1
\\&w_i \geq 0~\forall i
\end{split}
\end{align}
where the first constraint ensures that agents in $S$ choose to participate, and the last two constraints enforce that the platform's estimator is unbiased. 

Finally, we make the following assumption that function $f$ is well-behaved for our purposes:
\begin{assumption}
$f$ is concave and differentiable.
\end{assumption}

\subsection{The Privacy-Constrained Agent Model}\label{sec: variant_model_description}

We now consider a variant of our model of agent behavior. In the ``privacy-constrained agent model'', each agent $i$, on top of a privacy cost, also has a \emph{privacy budget} $B_i$ which is the maximum privacy cost the agent is willing to incur. An agent's utility for participation is then given by 
\begin{align*}
u_i(\w,\eta) 
= 
\begin{cases}
g \left(\sum_{i \in S} w_i^2 \sigma^2 + \frac{2}{\eta^2}\right)~~&\text{if } c_i w_i \eta \leq B_i
\\ - \infty &\text{otherwise},
\end{cases}
\end{align*}
where $g$ is a non-negative (agents get more utility from participating than non-participating) and decreasing function. I.e., an agent is never willing to participate if his privacy budget is violated. Otherwise, if the agent's privacy requirement is met, his utility is given by a function of the accuracy of the model. The analyst's program is then given by:
\begin{align}
\begin{split}
\min_{\eta,\w,S}~& \sum_{i \in S} w_i^2 \sigma^2 + \frac{2}{\eta^2}
\\\text{s.t.}~& c_i w_i \eta \leq B_i~~\forall i \in S
\\& \sum_{i \in S} w_i = 1
\\&w_i \geq 0~\forall i
\end{split}
\end{align}
Note that in this case, each agent gets utility $g \left(\sum_{i \in S} w_i^2 \sigma^2 + \frac{2}{\eta^2}\right)$, and minimizing the variance of the platform's estimate also maximizes the agents' utilities. We can further re-write the program as
\begin{align}
\begin{split}
\min_{\eta,\w,S}~& \sum_{i \in S} w_i^2 \sigma^2 + \frac{2}{\eta^2}
\\\text{s.t.}~& w_i \eta \leq \tau_i~~\forall i \in S
\\& \sum_{i \in S} w_i = 1
\\&w_i \geq 0~\forall i,
\end{split}
\end{align}
where $\tau_i \triangleq \frac{B_i}{c_i}$ is called the \emph{privacy threshold} of agent $i$. We assume $\tau_i > 0$ for all $i$; agents with $\tau_i = 0$ require $w_i = 0$, do not affect the objective function, and can be dropped from the computation without loss of generality.


This model is a more tractable variant of the quasi-linear one in that participation decisions by the agents are significantly simplified. Even when the value of $\eta$ used by the platform is known or announced, the ``quasi-linear'' model considers agents that trade-off their privacy losses with their benefit from the platform's model. In this case, an agent's participation decision depends on them being able to anticipate the quality of the final model, which requires access to the weights given to other agents. To do so, the platform either needs to communicate these weights to the agent, or each agent can solve the optimization himself, which may require unrealistic knowledge about the other agents' costs as well as unrealistic reasoning and computational power. In contrast, an agent in the ``privacy-constrained'' setting makes a simpler decision that only depends on his own weight $w_i$ (this can be interpreted as a promise to the agent on how much their data is going to be used at most) and their privacy preferences $\tau_i$. In Section \ref{sec:variant_model}, we will note that despite its relative simplicity, the ``privacy-contrained'' model offers similar insights to that of the ``quasi-linear'' model in Section~\ref{sec:og_model}; this provides evidence that even this simplified model can provide valuable guidance on how to acquire and use data from agents with heterogeneous privacy preferences.

\section{Characterizing the Optimal Solution under the ``Quasi-Linear'' Agent Model}\label{sec:og_model}

Recall that the optimization problem solved by the platform is given by 
\begin{align}
\begin{split}\label{eq:weighted_optim}
\min_{\eta,\w,S}~& \sum_{i \in S} w_i^2 \sigma^2 + \frac{2}{\eta^2}
\\\text{s.t.}~& c_i w_i \eta \leq f \left(\sigma^2 \sum_{i \in S} w_i^2 + \frac{2}{\eta^2}\right) -\outopt ~\forall i \in S
\\& \sum_{i \in S} w_i = 1
\\&w_i \geq 0~\forall i
\end{split}
\end{align}

\paragraph{Re-writing the optimization problem} We first rewrite the optimization problem solved by the analyst in a simpler form. To do so, we show how to drop the dependency of the optimization program in $S$. We now only need to optimize over $\vec{w}$ and $\eta$.

\begin{clm}\label{clm: optim_final}
Consider the following program:
\begin{align}
\begin{split}\label{eq:weighted_optim_final}
\min_{\eta,\w}~& \sigma^2 \sum_{i = 1}^n w_i^2+ \frac{2}{\eta^2}
\\\text{s.t.}~& c_i w_i \eta \leq f \left(\sigma^2 \sum_{i = 1}^n w_i^2 + \frac{2}{\eta^2}\right) - \outopt~\forall i \in [n]
\\& \sum_{i = 1}^n w_i = 1
\\&w_i \geq 0~\forall i.
\end{split}
\end{align}
Let $S = \{i~\text{s.t.}~w_i > 0\}$. Then $(\w,\eta,S)$ is an optimal solution to Program~\eqref{eq:weighted_optim} if and only if $(\w,\eta)$ is an optimal solution to Program~\eqref{eq:weighted_optim_final}, and both programs have the same optimal value. 
\end{clm}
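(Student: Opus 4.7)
The plan is to prove the equivalence by showing that the set variable $S$ in \eqref{eq:weighted_optim} is essentially redundant: its optimal choice can always be recovered as $\{i : w_i > 0\}$ without loss. I would exhibit a correspondence between feasible solutions of the two programs that preserves the objective value exactly. Since the objective sums $\sigma^2 w_i^2$ over $S$ in one program and over $[n]$ in the other, the key observation is that any index $i \notin S$ can be ``added'' with $w_i = 0$, leaving both the objective and the normalization $\sum w_i = 1$ unchanged.

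For the forward direction, take $(\w, \eta, S)$ feasible for \eqref{eq:weighted_optim} and extend $\w$ by setting $w_i = 0$ for $i \notin S$. Nonnegativity, the normalization $\sum_{i=1}^n w_i = 1$, and the participation constraints for $i \in S$ carry over to \eqref{eq:weighted_optim_final} unchanged. The only new content is the participation constraint for $i \notin S$, which reduces to $0 \le f\bigl(\sigma^2 \sum_j w_j^2 + 2/\eta^2\bigr) - \outopt$. This holds because $S$ is necessarily nonempty (else $\sum_{i \in S} w_i = 1$ is infeasible), so picking any $i_0 \in S$ yields
\[
0 \le c_{i_0} w_{i_0} \eta \le f\!\left(\sigma^2 \sum_{j \in S} w_j^2 + 2/\eta^2\right) - \outopt,
\]
and the argument of $f$ equals $\sigma^2 \sum_{j=1}^n w_j^2 + 2/\eta^2$ under the extension. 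For the reverse direction, take $(\w, \eta)$ feasible for \eqref{eq:weighted_optim_final}, set $S = \{i : w_i > 0\}$, and observe that $\sum_{i \in S} w_i = \sum_{i=1}^n w_i = 1$; indices with zero weight drop out of both the objective and of all remaining participation constraints, so $(\w, \eta, S)$ is feasible for \eqref{eq:weighted_optim} with the same objective value.

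Chaining the two directions gives equality of optimal values, and the ``iff'' claim follows: given $(\w, \eta)$ with $S := \{i : w_i > 0\}$, the two programs agree on its objective, and feasibility transfers via the constructions above, so optimality in one implies optimality in the other. The only real subtlety I expect is in handling the ``dummy'' participation constraints in \eqref{eq:weighted_optim_final} at indices with $w_i = 0$; once one notes that nonemptiness of $S$ (forced by normalization) plus any surviving participation constraint in \eqref{eq:weighted_optim} automatically implies $f(\cdot) \ge \outopt$, the rest of the argument is routine bookkeeping.
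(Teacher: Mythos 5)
Your proof is correct and follows essentially the same route as the paper's: a value- and feasibility-preserving correspondence obtained by extending $\w$ with zeros and reading off $S = \{i : w_i > 0\}$. If anything you are slightly more careful than the paper, which asserts the constraint equivalence ``for all $i$'' without explicitly checking the dummy constraints $0 \le f(\cdot) - \outopt$ at indices with $w_i = 0$; your observation that $S$ is nonempty (forced by $\sum_{i\in S} w_i = 1$) and hence some surviving participation constraint already implies $f(\cdot) \ge \outopt$ is exactly the right way to close that step.
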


\begin{proof}[Proof Sketch]
Clearly $(\w,\eta,S)$ yields the same objective value for Program~\eqref{eq:weighted_optim} as $(\w,\eta)$ does for Program~\eqref{eq:weighted_optim_final}. Further, $(\w,\eta,S)$ is feasible for Program~\eqref{eq:weighted_optim} if and only if $(\w,\eta)$ is feasible for Program~\eqref{eq:weighted_optim_final}. Both statements put together imply that both programs have the same optimal value and that said optimal value is reached at $(\w,\eta,S)$ and $(\w,\eta)$ respectively. More details are provided in Appendix~\ref{app:proof_claims_quasilinear}.
\end{proof}

In short, note that if we find an optimal solution to Program~\ref{eq:weighted_optim_final}, we can construct an optimal solution to Program~\eqref{eq:weighted_optim} with the same objective value. Studying Program~\ref{eq:weighted_optim_final} is without loss of generality. 

Note however that the above optimization problem may be hard to solve directly as it is not convex: indeed, $(\w,\eta) \rightarrow w_i \eta$ is not a jointly convex function of $\w$ and $\eta$. To deal with this issue, we note that once we fix the value of $\eta$, the problem is now entirely convex. Indeed, i) the objective function is convex in $\w$, ii) $c_i w_i \eta$ is convex in $\w$ and $-f \left(\sigma^2 \sum_{i = 1}^n w_i^2 + \frac{2}{\eta^2}\right)$ is convex in $\w$ (because -f is convex increasing and $\sigma^2 \sum_{i = 1}^n w_i^2 + \frac{2}{\eta^2}$ is convex), and iii) the weight constraints are linear. In this section, we mostly focus on understanding this convex optimization problem for any fixed value of $\eta$. Finding the best $\eta$ corresponds to finding the optimum of a one-dimensional function, which can be approximated heuristically through black-box optimization techniques.

\paragraph{Properties of the optimal solution} In the rest of this section, we order agents as a function of their privacy costs. I.e., without loss of generality, we number agents such that $c_1 \leq \ldots \leq c_n$. As mentioned above, we now consider optimization at fixed $\eta$. I.e., for any given $\eta$, we aim to solve program
\begin{align}
OPT(\eta) =
\min_{\w}~& \sigma^2 \sum_{i = 1}^n w_i^2+ \frac{2}{\eta^2} \notag
\\\text{s.t.}~& c_i w_i \eta \leq f \left(\sigma^2 \sum_{i = 1}^n w_i^2 + \frac{2}{\eta^2}\right) - \outopt ~\forall i \in [n] \label{eq:weighted_optim_fixedeta}
\\& \sum_{i = 1}^n w_i = 1 \notag
\\&w_i \geq 0~\forall i. \notag
\end{align}

We first note the following simple monotonicity result on the structure of an optimal solution to Program~\eqref{eq:weighted_optim_fixedeta} hence \eqref{eq:weighted_optim_final}:
\begin{clm}\label{clm:monotone_w}
Take any $\eta \geq 0$ such that Program~\eqref{eq:weighted_optim_fixedeta} is feasible, then any optimal solution to Program~\eqref{eq:weighted_optim_fixedeta} satisfies $w_1 \geq \ldots \geq w_n$.
\end{clm}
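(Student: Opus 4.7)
The plan is a swap argument combined with uniqueness of the optimizer at fixed $\eta$. Concretely, I would assume toward contradiction that some optimal $\w$ of Program~\eqref{eq:weighted_optim_fixedeta} has $w_i < w_j$ for some indices $i < j$ (so $c_i \le c_j$), exchange the two weights to produce another feasible vector with the same objective value, and then invoke uniqueness to contradict $w_i < w_j$.

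Step one is to argue that at fixed $\eta$ the minimizer is unique. The objective $\sigma^2 \sum_k w_k^2$ is strictly convex in $\w$. The feasible set is convex: the constraints $\sum_k w_k = 1$ and $w_k \ge 0$ are linear, and each privacy constraint rewrites as $c_i w_i \eta - f\!\bigl(\sigma^2 \sum_k w_k^2 + 2/\eta^2\bigr) + \outopt \le 0$; here the linear term is convex, and because $f$ is concave and decreasing, $-f$ is convex and nondecreasing, which when composed with the convex map $V(\w) := \sigma^2 \sum_k w_k^2 + 2/\eta^2$ yields a convex function of $\w$. Strict convexity of the objective on a convex feasible set gives a unique minimizer whenever the program is feasible.

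Step two is the swap itself. Define $\w'$ by $w_i' := w_j$, $w_j' := w_i$, and $w_k' := w_k$ for $k \notin \{i,j\}$. Then $\w'$ trivially satisfies the sum and nonnegativity constraints, and $\sum_k (w_k')^2 = \sum_k w_k^2$, so both the objective and the quantity $V(\w)$ are preserved. The only new privacy checks are at $i$ and $j$: for agent $i$, $c_i w_i' \eta = c_i w_j \eta \le c_j w_j \eta \le f(V(\w)) - \outopt$, using $c_i \le c_j$ together with the original constraint at $j$; for agent $j$, $c_j w_j' \eta = c_j w_i \eta \le c_j w_j \eta \le f(V(\w)) - \outopt$, using $w_i < w_j$. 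All other constraints are untouched, so $\w'$ is feasible and achieves the same optimal objective value.

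Uniqueness then forces $\w = \w'$, hence $w_i = w_j$, contradicting $w_i < w_j$. The step that requires real care is uniqueness: without it the swap only exhibits \emph{some} monotone optimum rather than showing that \emph{every} optimum is monotone, which is what the claim asserts. Strict convexity of $\sigma^2 \sum_k w_k^2$ on the convex feasible set carries this through cleanly; note that neither differentiability of $f$ nor any finer structure is needed, only monotonicity and concavity.
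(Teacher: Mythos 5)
Your proof is correct, but it takes a genuinely different route from the paper's. The paper argues by \emph{strict local improvement}: assuming $w_i < w_j$ for $i<j$, it shifts a small mass $\varepsilon$ from $w_j$ to $w_i$, shows the variance strictly decreases, and verifies feasibility of the perturbed vector — which requires invoking that $f$ is decreasing, since the perturbation changes $\sigma^2\sum_k w_k^2$ and one must check that the right-hand side $f(\cdot)-\outopt$ of the participation constraints only gets larger. This directly contradicts optimality, so no uniqueness statement is needed, and the argument applies verbatim to \emph{every} optimal solution. Your argument instead does a \emph{full swap}, which preserves the objective and the variance exactly; the feasibility check then becomes trivial (it uses only $c_i\le c_j$ and $w_i\le w_j$, with no appeal to any property of $f$), but the burden shifts to establishing uniqueness of the minimizer of the fixed-$\eta$ program. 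That uniqueness step is where concavity and monotonicity of $f$ enter for you, via convexity of the feasible set; this is sound, and indeed the paper itself records exactly this convexity structure in the discussion preceding Program~\eqref{eq:weighted_optim_fixedeta}, so you are not assuming anything beyond what the paper already uses. You are also right to flag that without uniqueness the swap would only produce \emph{some} monotone optimum rather than proving monotonicity of every optimum — that is precisely the gap your Step 1 closes, and it is the main thing a reader should check. One small trade-off worth noting: the paper's perturbation argument is more robust in that it would survive even if the objective were only (non-strictly) convex, whereas your argument leans on strict convexity of $\sigma^2\sum_k w_k^2$; conversely, your version isolates cleanly which hypotheses on $f$ are doing work (concavity and monotonicity for the geometry of the feasible set, differentiability not at all).
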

I.e., as we would intuitively expect, agents with smaller privacy costs get more weight in the computation. This allows the platform to provide more privacy to agents with higher costs to ensure said costs do not become too high and violate the participation constraint. We also note the following monotonicity result of independent interest, which states that the privacy costs of the agents are in fact monotone increasing in the $c_i$'s. I.e., agents with less stringent privacy attitudes also end up incurring lower privacy costs. 
\begin{clm}\label{clm:monotone_equilibrium_costs}
Any optimal solution $\w$ to Program~\eqref{eq:weighted_optim_fixedeta} satisfies $c_i w_i \eta \leq c_j w_j \eta$ for all $i < j$.
\end{clm}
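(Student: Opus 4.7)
The plan is to prove the claim by contradiction, using a local mass-transfer perturbation from $w_i$ to $w_j$. Suppose an optimal $\w$ for Program~\eqref{eq:weighted_optim_fixedeta} satisfies $c_i w_i > c_j w_j$ for some $i < j$; I may assume $\eta > 0$ (else the objective is infinite) and $c_i > 0$ (else $c_i w_i = 0 \le c_j w_j \eta$ trivially). Since $c_i \le c_j$, the hypothesis $c_i w_i > c_j w_j$ forces $c_j > 0$ and $w_i > w_j$ strictly. For small $\delta > 0$, define $\w'$ by $w'_i = w_i - \delta$, $w'_j = w_j + \delta$, and $w'_k = w_k$ otherwise. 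The normalization $\sum_k w'_k = 1$ and nonnegativity are preserved for $\delta < w_i$.

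The objective improves for the standard symmetry reason. Writing $V = \sigma^2\sum_k w_k^2 + 2/\eta^2$ and $V'$ for the corresponding quantity at $\w'$, a direct computation gives
\[
V' - V \;=\; 2\sigma^2 \delta\bigl(w_j - w_i + \delta\bigr),
\]
which is strictly negative whenever $0 < \delta < w_i - w_j$, so the objective strictly decreases. Note also that the right-hand side of every participation constraint is $f(V') - o$, which is strictly larger than $f(V) - o$ because $f$ is decreasing and $V' < V$.

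The remaining task, and the main subtle point, is to verify that $\w'$ is still feasible. For any agent $k \notin \{i,j\}$ the left-hand side $c_k w_k \eta$ is unchanged while the right-hand side grows, so the constraint is preserved. For agent $i$, the left-hand side strictly decreases to $c_i(w_i - \delta)\eta$, so its constraint also holds. Agent $j$ is the delicate case: its left-hand side becomes $c_j(w_j + \delta)\eta = c_j w_j \eta + c_j \delta \eta$. The contradiction hypothesis gives the strict slack
\[
f(V) - o \;\ge\; c_i w_i \eta \;>\; c_j w_j \eta,
\]
and since $f(V') > f(V)$, this strict slack is preserved (and only enlarged) for $\delta$ sufficiently small, absorbing the increment $c_j \delta \eta$ by continuity. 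Hence $\w'$ is feasible and strictly better than $\w$, contradicting optimality. The main obstacle is precisely this feasibility check for agent $j$, and the argument hinges on exploiting the strict gap provided by the contradiction hypothesis together with the fact that the participation constraints' shared right-hand side only improves under the proposed swap.
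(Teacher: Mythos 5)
Your proof is correct and follows essentially the same route as the paper's: a mass transfer of $\delta$ from $w_i$ to $w_j$, the variance computation $2\sigma^2\delta(w_j - w_i + \delta) < 0$, and the observation that agent $j$'s constraint survives because the strict gap $c_i w_i \eta > c_j w_j \eta$ (together with the improved right-hand side $f(V') - o > f(V) - o$) absorbs the increment $c_j\delta\eta$ for small $\delta$. Your treatment of the degenerate cases $\eta = 0$ and $c_i = 0$ and the explicit per-agent feasibility check are slightly more careful than the paper's, but the argument is the same.
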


The proofs of the three previous claims are provided in Appendix~\ref{app:proof_claims_quasilinear}. The proofs are by contradiction and show that if an optimal solution satisfies the condition of each of the claim, then we can construct a feasible solution with better objective, contradicting optimality. Putting the previous claims together, we obtain the following corollary:

\begin{cor}\label{cor:full_participation}
Any optimal solution $\w$ satisfies $w_i > 0$ for all $i \in [n]$.
\end{cor}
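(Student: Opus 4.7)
The plan is to combine Claim~\ref{clm:monotone_w} with a swap perturbation argument. Claim~\ref{clm:monotone_w} gives $w_1 \geq w_2 \geq \cdots \geq w_n$ at any optimum, so if some coordinate vanishes then necessarily $w_n = 0$; it therefore suffices to show $w_n > 0$. Moreover, the sum-to-one constraint combined with monotonicity forces $w_1 \geq 1/n > 0$. I would then argue by contradiction.

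Suppose an optimal $\w$ has $w_n = 0$, and consider the swap $\w'$ defined by $w_1' = w_1 - \delta$, $w_n' = \delta$, and $w_i' = w_i$ for $1 < i < n$, with $\delta \in (0, w_1)$. Then $\w' \geq 0$ and $\sum_i w_i' = 1$, and a direct expansion gives
\[
\sigma^2 \sum_i (w_i')^2 - \sigma^2 \sum_i w_i^2 \;=\; 2\sigma^2 \delta(\delta - w_1),
\]
which is strictly negative for all sufficiently small $\delta > 0$. Hence the objective strictly decreases at $\w'$, so if $\w'$ is also feasible, this contradicts optimality.

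It remains to verify feasibility of $\w'$ for small $\delta$. Since the argument $\sigma^2 \sum_i (w_i')^2 + 2/\eta^2$ of $f$ strictly decreases at $\w'$ and $f$ is decreasing, the right-hand side $f(\cdot) - \outopt$ of every privacy constraint (weakly, and by strict monotonicity of $f$, strictly) increases. For $i = 1$ the left-hand side $c_1 w_1' \eta$ decreases, and for $1 < i < n$ it is unchanged, so these constraints continue to hold. The only genuinely new constraint is for agent $n$, whose left-hand side $c_n \delta \eta$ tends to $0$ as $\delta \to 0$, while the right-hand side at $\w'$ is strictly positive because the right-hand side at $\w$ is $\geq 0$ by feasibility and has strictly increased. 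Hence the constraint for agent $n$ holds for small enough $\delta > 0$, completing the contradiction.

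The main obstacle is precisely this last step: verifying the constraint for agent $n$ when the right-hand side at $\w$ happens to equal zero, which is resolved by strict monotonicity of $f$ on the relevant interval. Note that Claim~\ref{clm:monotone_equilibrium_costs} is not strictly needed here, though it provides complementary structural information (e.g., together with $w_n = 0$ it would force $c_i w_i = 0$ for all $i$, illuminating the degenerate regime the perturbation argument rules out).
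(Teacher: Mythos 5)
Your overall route is genuinely different from the paper's: the paper derives the corollary in two lines from Claim~\ref{clm:monotone_w} together with Claim~\ref{clm:monotone_equilibrium_costs} (if $w_n=0$ then $c_i w_i \eta \le c_n w_n \eta = 0$ for all $i$, forcing $w_i=0$ for all $i$ and contradicting $\sum_i w_i=1$), whereas you set Claim~\ref{clm:monotone_equilibrium_costs} aside and rerun a perturbation argument from scratch. That is a legitimate alternative in principle, and your reduction to showing $w_n>0$, the variance computation $2\sigma^2\delta(\delta-w_1)<0$, and the feasibility checks for agents $1,\dots,n-1$ are all fine.

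The gap is exactly where you flag it, and your proposed resolution does not work. When $f\bigl(\sigma^2\sum_i w_i^2 + 2/\eta^2\bigr)-\outopt=0$ at the optimum, the inference ``the left-hand side $c_n\delta\eta$ tends to $0$ and the right-hand side is strictly positive for each $\delta>0$, hence the constraint holds for small $\delta$'' is a non sequitur: the right-hand side also tends to $0$ as $\delta\to 0$. Both sides vanish \emph{linearly} in $\delta$ --- the left with slope $c_n\eta$, the right with slope at most $2\sigma^2 w_1\,|f'(V)|$ by concavity, where $V$ is the variance at $\w$ --- so agent $n$'s constraint holds for small $\delta$ only if $c_n\eta \le 2\sigma^2 w_1 |f'(V)|$, which strict monotonicity of $f$ does not guarantee (take $f$ linear with very small slope and $c_n$ large: no $\delta>0$ is then feasible). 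The correct way to dispose of this case is not via the perturbation at all: if $f(V)-\outopt=0$, feasibility already forces $c_i w_i \eta=0$ for every $i$, which under the implicit standing assumption $c_i>0$, $\eta>0$ (one the paper's own argument also relies on) contradicts $\sum_i w_i=1$ directly. Equivalently, observe that at any feasible point some $w_i\ge 1/n$, so $f(V)-\outopt\ge c_i w_i\eta>0$; the degenerate case therefore never arises, the limit of your right-hand side is strictly positive, and your perturbation closes without even needing strict monotonicity of $f$. Either patch repairs the proof; as written, the step you yourself call the main obstacle is not resolved.
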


\begin{proof}
Suppose this were not true, i.e. for some $i$, $w_i = 0$. Then by monotonicity of $\w$ proven in Claim~\ref{clm:monotone_w}, it must be that $w_{n} = 0$. But then, by Claim~\ref{clm:monotone_equilibrium_costs}, it must be that $c_i w_i \leq c_{n} w_{n} = 0$ for all $i$. This implies $w_i = 0$ for all $i$, which contradicts $\sum_i w_i = 1$. 
\end{proof}

We note that in our optimal solution, \emph{every} agent is incentivized to participate in the platform and to report their data. This is the result of a self-reinforcing effect exhibited in our setting: on the one hand, more participation means that the platform computes a more accurate model, which incentivizes more agent participation; on the other hand, more participation lowers the privacy costs of the agents (as it lowers how much the computation depends on any given agent's data), which also helps incentivizing more participation. 

\paragraph{A semi-closed form characterization} We now provide the main characterization result of this section; namely, a semi-closed form solution for Program~\eqref{eq:weighted_optim_fixedeta}. 

\begin{thm}\label{thm:semi_close-form}
Assume Program~\ref{eq:weighted_optim_fixedeta} is feasible. Let $\w$ be any optimal solution to Program~\ref{eq:weighted_optim_fixedeta}. There exists constants $K$ and $W$ and an integer $t$ such that $w_i = W$ for all $i \leq t$ and $w_i = K/c_i$ for all $i \geq t+1$.
\end{thm}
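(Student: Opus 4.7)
For fixed $\eta$, Program~\eqref{eq:weighted_optim_fixedeta} is convex (as already argued in the paper), so my plan is to extract the stated structure from its KKT conditions. I would begin by checking a constraint qualification: the equality $\sum_i w_i = 1$ is affine, the inequality constraints are smooth, and since $f$ is concave one can typically exhibit a strictly feasible interior point whenever the problem is feasible, so Slater's condition holds and KKT necessity applies. Corollary~\ref{cor:full_participation} then immediately kills the non-negativity multipliers, since $w_i > 0$ at any optimum.

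Introducing multipliers $\lambda_i \ge 0$ for the privacy constraints and $\mu$ for the weight-sum equality, and writing $V := \sigma^2 \sum_j w_j^2 + 2/\eta^2$ and $\Lambda := \sum_j \lambda_j$, the stationarity condition reads
$$
2\sigma^2 w_k \bigl(1 - f'(V)\,\Lambda\bigr) + \lambda_k c_k \eta + \mu = 0, \qquad \forall k \in [n].
$$
Since $f$ is decreasing and concave, $f'(V) \le 0$ and $\Lambda \ge 0$, so the coefficient $A := 2\sigma^2\bigl(1 - f'(V)\Lambda\bigr)$ is strictly positive. Solving for $w_k$ yields the clean affine form $w_k = W - \lambda_k c_k \eta / A$ with $W := -\mu/A$, and positivity of the weights forces $W > 0$.

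This immediately produces a dichotomy. If $\lambda_k = 0$, then $w_k = W$, a common value across all such ``pooled'' agents. If $\lambda_k > 0$, complementary slackness on the privacy constraint gives $c_k w_k \eta = f(V) - \outopt$, so $w_k = K/c_k$ with $K := (f(V) - \outopt)/\eta$; moreover, since $\lambda_k, c_k, \eta, A > 0$, one has $w_k = W - \lambda_k c_k \eta/A < W$ \emph{strictly}. Hence pooled agents are precisely those whose weight attains the maximum value $W$, while non-pooled agents carry weights strictly below $W$ of the form $K/c_k$.

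To conclude the prefix structure I would invoke Claim~\ref{clm:monotone_w}: with $c_1 \le \ldots \le c_n$, optimal weights satisfy $w_1 \ge \ldots \ge w_n$. Because pooled weights all equal $W$ and non-pooled weights are strictly less than $W$, the pooled group must occupy an initial segment $\{1,\ldots,t\}$, while agents $\{t+1,\ldots,n\}$ carry weights $K/c_i$. (Boundary consistency: feasibility of the pooled constraint at $t$ gives $c_t \le K/W$, and $w_{t+1} = K/c_{t+1} < W$ gives $c_{t+1} > K/W$, so the threshold cleanly separates the two groups.) The main obstacle I anticipate is not computational but conceptual: cementing the \emph{strict} inequality $w_k < W$ for every non-pooled agent is what prevents ambiguity at the pooling boundary and is the key step that lets the two structural pieces of the theorem coexist without overlap.
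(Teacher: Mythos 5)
Your proof follows the same KKT route as the paper's, and the core of it is sound: the stationarity computation, the elimination of the non-negativity multipliers via Corollary~\ref{cor:full_participation}, the positivity of $A = 2\sigma^2(1 - f'(V)\Lambda)$ (a detail the paper leaves implicit but that is genuinely needed for the division to make sense), and the dichotomy between $\lambda_k = 0$ and $\lambda_k > 0$ all check out. Your final step differs from the paper's in a pleasant way: you deduce that the pooled agents form a prefix from Claim~\ref{clm:monotone_w} (monotonicity of the weights) combined with the strict inequality $w_k = W - \lambda_k c_k \eta / A < W$ for every agent with an active multiplier, whereas the paper instead derives a contradiction with Claim~\ref{clm:monotone_equilibrium_costs} (monotonicity of the realized privacy costs $c_i w_i \eta$). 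Both arguments are valid; yours is arguably more self-contained since it only needs the more elementary of the two monotonicity claims, at the price of having to track the strict inequality through the multiplier formula.

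The one genuine gap is your treatment of the constraint qualification. You assert that ``since $f$ is concave one can typically exhibit a strictly feasible interior point whenever the problem is feasible,'' but concavity of $f$ does not imply this: if the outside option $\outopt$ is large relative to $f$, the feasible region can degenerate so that every feasible $\w$ makes some (in fact all) participation constraints tight, and then Slater's condition fails and KKT necessity cannot be invoked. The paper handles this explicitly with a separate claim showing that when the program is feasible but not strictly feasible, every feasible point must satisfy $c_i w_i \eta = f(\sigma^2\sum_j w_j^2 + 2/\eta^2) - \outopt$ for all $i$ (otherwise one can perturb toward a strictly feasible point), which forces $w_i = K/c_i$ for all $i$ and yields the theorem with $t = 0$. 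You should either supply that argument or prove strict feasibility under an additional hypothesis; as written, the word ``typically'' is doing work that the proof cannot delegate.
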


\begin{proof}[Proof sketch]
The full proof of the result relies follows by examining the implications of the Karush–Kuhn–Tucker (KKT) conditions for optimality and is provided in Appendix~\ref{app:proof_mainthm}. One technicality is that the KKT conditions require that Slater's condition holds. This means that the optimization program needs to be strictly feasible, i.e. there must exist a feasible solution such that all inequality constraints are strictly satisfied. To circumvent this issue, we note that when we do not have strict feasibility, any feasible (hence the optimal) solution must make \emph{all} participation constraints tight hence is easy to characterize. 
\end{proof}

We remark that our optimal solution exhibits interesting structure. First, there is a pooling region in which the agents with the lowest privacy costs are given the same weights. Then, agent weights start decreasing in their cost to ensure that their privacy losses do not become too big. We note that this result is in line with that of~\cite{acq2}. This is perhaps surprising given that~\cite{acq2} considers a different objective and constraints for the platform.

A potential explanation may be that absent privacy constraint, the optimal solution in terms of variance is to give the same weight to every agent. However, this may not be possible due to the agents' privacy requirements. Instead, one wants to have a solution that keeps the weights of different agents equal when possible to minimize the variance due to these agents, and only give a different, lower weight to the agents when this is unavoidable to ensure they participate in the computation. 

\paragraph{Finding the optimal value of $\eta$} One possible approach to optimize over the value of $\eta$ is to do a grid search over said 1-dimensional parameter. However, $OPT(\eta)$ is a black-box, not well understood function of $\eta$, that may be complex to optimize over. Another approach is to refine our understanding of the relationship between $K$, $W$, and $\eta$. One way to do so is to first note that if we know $t$, there is a closed-form relationship between $K$ and $W$. In particular, we have that 
\[
t W + K \sum_{i > t} \frac{1}{c_i} = 1,
\]
implying that 
\[
W = \frac{1}{t} \left(1 - K \sum_{i > t} \frac{1}{c_i}\right).
\]
From the proof of Theorem~\ref{thm:semi_close-form} found in Appendix~\ref{app:proof_mainthm}, we also know that the participation constraint is tight for all agents $i > t$ with $w_i = \frac{K}{c_i}$, hence it must be that 
\[
K \eta = f\left(\sigma^2 t W^2 + K^2 \sum_{i > t} \frac{1}{c_i}^2 + \frac{2}{\eta^2}\right) - \outopt.
\]
This can be rewritten as
\begin{align*}
&K \eta = f\left(\frac{\sigma^2}{t} \left(1 - K \sum_{i > t} \frac{1}{c_i}\right)^2 + K^2 \sum_{i > t} \frac{1}{c_j}^2 + \frac{2}{\eta^2}\right) - \outopt.
\end{align*}
In particular, for each possible value of $t$, we can restrict our search to the parameters $K$ and $\eta$ that satisfy the above equation. In the special case where $f$ is linear, this equation is quadratic and has (at most) two well-behaved solutions that depend continuously on the value of $\eta$. This facilitates a grid search approach to find the best $\eta$ for each possible value of $t$. We can then simply pick the value of $t$ that leads to the best objective value.

\section{Characterizing the Optimal Solution under the ``Privacy-Constrained'' Agent Model}\label{sec:variant_model}

Recall that the optimization program solved by the platform is given by:
\begin{align}\label{eq:optim_threshold} 
\begin{split}
\min_{\eta,\w,S}~& \sum_{i \in S} w_i^2 \sigma^2 + \frac{2}{\eta^2}
\\\text{s.t.}~& w_i \eta \leq \tau_i~~\forall i \in S
\\& \sum_{i \in S} w_i = 1
\\&w_i \geq 0~\forall i,
\end{split}
\end{align}

\paragraph{Re-writing the optimization problem} We start by noting that Program~\ref{eq:optim_threshold} can be rewritten in a simpler form involving no $S$ variable. Indeed:

\begin{clm}\label{clm: optim_final_variant}
Consider the following program:
\begin{align}
\begin{split}\label{eq:optim_threshold_final}
\min_{\eta,\w}~& \sum_{i=1}^n w_i^2 \sigma^2 + \frac{2}{\eta^2}
\\\text{s.t.}~& w_i \eta \leq \tau_i~~\forall i \in [n]
\\& \sum_{i = 1}^n w_i = 1
\\&w_i \geq 0~\forall i,
\end{split}
\end{align}
Let $S = \{i~\text{s.t.}~w_i > 0\}$. Then $(\w,\eta,S)$ is an optimal solution to Program~\eqref{eq:optim_threshold} if and only if $(\w,\eta)$ is an optimal solution to Program~\eqref{eq:optim_threshold_final}, and both programs have the same optimal value. 
\end{clm}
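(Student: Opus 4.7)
The plan is to adapt the argument behind the proof sketch of Claim~\ref{clm: optim_final} to this simpler setting. The only structural difference between Program~\eqref{eq:optim_threshold} and Program~\eqref{eq:optim_threshold_final} is that the privacy threshold constraint $w_i \eta \le \tau_i$ is enforced only for $i \in S$ in the former, but for every $i \in [n]$ in the latter. Since the assumption $\tau_i > 0$ makes $w_i \eta \le \tau_i$ trivially hold whenever $w_i = 0$, extending $S$ to all of $[n]$ (and filling in zero weights outside $S$) costs nothing. I will use this to set up a correspondence between feasible solutions of the two programs that preserves the objective value.

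First I would argue the forward direction: take any feasible $(\w, \eta)$ for Program~\eqref{eq:optim_threshold_final} and set $S = \{i : w_i > 0\}$. Because $w_i = 0$ for $i \notin S$, the objectives agree term by term, $\sum_{i=1}^n w_i^2 \sigma^2 = \sum_{i \in S} w_i^2 \sigma^2$, and $\sum_{i=1}^n w_i = \sum_{i \in S} w_i = 1$. The threshold constraints for $i \in S$ are a strict subset of those enforced in~\eqref{eq:optim_threshold_final}, so feasibility transfers. Hence $(\w, \eta, S)$ is feasible for~\eqref{eq:optim_threshold} with the same objective value.

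For the reverse direction, take any feasible $(\w, \eta, S)$ for Program~\eqref{eq:optim_threshold}. Without loss of generality I can assume $w_i = 0$ for all $i \notin S$: these variables do not appear in the objective or the sum constraint, and forcing them to zero only makes the non-negativity constraint easier to satisfy, so we can redefine them this way at no cost. Under this convention, I would check that $(\w, \eta)$ is feasible for~\eqref{eq:optim_threshold_final}: for $i \in S$ the threshold constraint is inherited directly, while for $i \notin S$ we have $w_i \eta = 0 \le \tau_i$ since $\tau_i > 0$. The sum-to-one and non-negativity constraints carry over verbatim, and the objectives agree.

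Combining the two directions gives a correspondence between feasible solutions of the two programs that preserves the objective, so their optimal values coincide and optimal solutions map to optimal solutions. The only subtle point worth highlighting in the writeup is the use of $\tau_i > 0$ in the reverse direction to handle the newly enforced constraints at indices $i \notin S$; this is the assumption that makes the rewriting valid and is exactly the condition stated right after Program~\eqref{eq:optim_threshold_final} in the preliminaries of the section. There is no technical obstacle beyond this bookkeeping, so the proof is entirely routine.
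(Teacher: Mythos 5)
Your proposal is correct and follows essentially the same route as the paper, which proves this claim by noting it is nearly identical to Claim~\ref{clm: optim_final}: match objectives term by term using $w_i=0$ off $S$, show feasibility transfers in both directions, and conclude that optimal values and optimizers correspond. Your explicit observation that $\tau_i>0$ (indeed $\tau_i\geq 0$ suffices) makes the newly enforced constraints at $i\notin S$ hold trivially is exactly the bookkeeping the paper leaves implicit.
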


\begin{proof}
The proof is nearly identical to that of Claim~\ref{clm: optim_final} and is omitted for the sake of brevity. 
\end{proof}

Once again, this optimization problem is not convex. However, if we fix $\eta$ and only consider $\w$ as a variable, our optimization problem becomes convex. We can then solve the problem efficiently for any desired value of $\eta$, then search over $\eta$ to find the optimal solution. 

In the rest of this section, we first show that the optimal solution has similar positivity and monotonicity properties to that of the ``quasi-linear'' model. We then show that we can characterize the optimal solution in semi-closed form. Finally, we exploit the structure of our problem to provide a simple characterization and algorithm for finding the optimal $\eta$.

\paragraph{Properties of the optimal solution} Without loss of generality, we number agents so that $\tau_1 \geq \tau_2 \geq \ldots \geq \tau_n$. I.e., agents with higher indices have more stringent privacy requirements. As mentioned above, we now consider the optimization at fixed $\eta$, and study the  problem 
\begin{align}
\begin{split}\label{eq:optim_fixed_eta}
\min_{\w}~& \sum_{i =1}^n w_i^2 \sigma^2 + \frac{2}{\eta^2}
\\\text{s.t.}~& w_i \eta \leq \tau_i~~\forall i \in [n]
\\& \sum_{i = 1}^n w_i = 1
\\&w_i \geq 0~\forall i.
\end{split}
\end{align}

To draw a parallel with the ``quasi-linear'' model, we first show that this variant of our model exhibits strong monotonicity and positivity properties. 
\begin{clm}\label{clm:positive_monotone}
Suppose $\w$ is an optimal solution. Then $w_i > 0~~\forall i \in [n]$, $w_1 \geq \ldots \geq w_n$, and $\frac{w_1}{\tau_1} \leq \ldots \leq \frac{w_n}{\tau_n}$.
\end{clm}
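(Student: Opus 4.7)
The plan is to establish the three properties separately, each through an elementary perturbation argument that exploits the strict convexity of the objective $\sigma^2 \sum_i w_i^2$ in $\w$. Throughout I use the ordering assumption $\tau_1 \geq \tau_2 \geq \ldots \geq \tau_n$ and work at a fixed $\eta$, as the claim concerns Program~\eqref{eq:optim_fixed_eta}.

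For positivity, I would suppose for contradiction that $w_i = 0$ for some $i$. Since the weights sum to $1$, there is some $j$ with $w_j > 0$. I would then transfer a small amount $\delta > 0$ from $w_j$ to $w_i$: the constraint at $j$ is only loosened, the constraint at $i$ becomes $\delta \eta \leq \tau_i$ and holds for $\delta$ sufficiently small (since $\tau_i > 0$ by assumption), and nonnegativity is preserved. The change in the objective equals $(w_j - \delta)^2 + \delta^2 - w_j^2 = -2 w_j \delta + 2\delta^2 < 0$ for $\delta$ small enough, contradicting optimality.

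For monotonicity of $\w$, I would suppose for contradiction that $i < j$ with $w_i < w_j$. Swapping the two weights yields another feasible point with the same objective: feasibility at $i$ uses $w_j\eta \leq \tau_j \leq \tau_i$, and feasibility at $j$ uses $w_i\eta < w_j\eta \leq \tau_j$. The midpoint of the original and the swapped solutions is then also feasible (each constraint is the average of two satisfied ones) and, by strict convexity of $u \mapsto u^2$, satisfies $2\bigl(\tfrac{w_i+w_j}{2}\bigr)^2 < w_i^2 + w_j^2$; it therefore strictly improves the objective, contradicting optimality.

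For monotonicity of $w_i/\tau_i$, suppose $i < j$ with $w_i/\tau_i > w_j/\tau_j$. Since $\tau_i \geq \tau_j$, this forces $w_i > w_j$, and the feasibility bound $w_i\eta \leq \tau_i$ gives $w_i/\tau_i \leq 1/\eta$; hence $w_j/\tau_j < 1/\eta$, i.e., the constraint at $j$ has strict slack. I would then move a small $\delta > 0$ from $w_i$ to $w_j$: the new $(w_j + \delta)\eta$ still satisfies the $j$-constraint, $w_i - \delta \geq 0$ by the positivity already established, and the $i$-constraint only loosens. The objective change equals $-2\delta(w_i - w_j) + 2\delta^2 < 0$ for small $\delta$, contradicting optimality. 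The main delicacy throughout lies in verifying that each perturbation stays feasible; in particular, the third part crucially exploits the fact that the violating inequality $w_i/\tau_i > w_j/\tau_j$, together with primal feasibility at $i$, forces strict slack at $j$ so that the reverse transfer is admissible.
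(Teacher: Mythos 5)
Your proof is correct and follows essentially the same route as the paper's: all three parts are perturbation/exchange arguments that transfer a small amount of weight between two agents, with the same feasibility checks (in particular, the key observation in the third part that $w_i/\tau_i > w_j/\tau_j$ together with $w_i\eta \le \tau_i$ forces strict slack at $j$ is exactly the paper's argument). The only cosmetic difference is that for the second part you use a swap-plus-midpoint convexity argument where the paper uses a small explicit transfer of $\varepsilon$; both are valid and equivalent in spirit.
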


\paragraph{A semi-closed form solution} Claim~\ref{clm:positive_monotone} provides a high level understanding of the shape of the optimal solution. We now refine this understanding by providing a semi-closed form solution to Program~\ref{eq:optim_threshold_final}.

\begin{thm}\label{thm:semi_close-form_2}
Let $\w$ be any optimal solution to Program~\ref{eq:optim_fixed_eta} (assuming feasibility). There exists $t \in \{0,\ldots,n\}$ and $W \geq 0$ such that is given by $w_i = W$ for all $i \leq t$ and $w_i = \frac{\tau_i}{\eta}$ for all $i \geq t+1$.
\end{thm}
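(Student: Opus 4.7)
The plan is to apply KKT conditions to Program~\eqref{eq:optim_fixed_eta}. Since $\eta$ is fixed, this is a convex quadratic program whose inequality constraints are all affine in $\w$, so the KKT conditions are necessary and sufficient for optimality, and no appeal to Slater's condition is required (unlike in the quasi-linear setting of Theorem~\ref{thm:semi_close-form}). First I would write the Lagrangian with multipliers $\lambda_i \geq 0$ for each privacy constraint $w_i\eta \leq \tau_i$, multipliers $\mu_i \geq 0$ for $w_i \geq 0$, and $\nu \in \mathbb{R}$ for $\sum_i w_i = 1$. Stationarity gives $2\sigma^2 w_i + \lambda_i \eta - \mu_i + \nu = 0$ for every $i$. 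By Claim~\ref{clm:positive_monotone}, $w_i > 0$ for every $i$, so complementary slackness forces $\mu_i = 0$. Setting $W \triangleq -\nu/(2\sigma^2)$, stationarity reduces to $w_i = W - \lambda_i \eta / (2\sigma^2)$. Since $\lambda_i \geq 0$ and $w_i \geq 0$, we obtain $W \geq 0$ and $w_i \leq W$, with equality if and only if $\lambda_i = 0$.

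From here I would extract the threshold structure directly. By complementary slackness, every $i$ with $\lambda_i > 0$ satisfies $w_i = \tau_i/\eta$ (privacy constraint tight), while every $i$ with $\lambda_i = 0$ satisfies $w_i = W$. Let $t$ be the largest index in $\{0,\ldots,n\}$ with $w_t = W$, taking $t=0$ if no such index exists. The monotonicity $w_1 \geq \cdots \geq w_n$ from Claim~\ref{clm:positive_monotone}, combined with the uniform upper bound $w_i \leq W$, forces $w_i = W$ for every $i \leq t$. For $i > t$, the maximality of $t$ gives $w_i < W$, hence $\lambda_i > 0$, and therefore $w_i = \tau_i/\eta$. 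This is precisely the form claimed in the theorem.

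The main obstacle — and the only step with any subtlety — is ensuring that this threshold respects the ordering of the $\tau_i$'s; a priori nothing in the KKT conditions alone forbids a ``mixed'' solution where constraint $i$ is tight but constraint $j > i$ is slack. This is ruled out by the combination of $\tau_j \leq \tau_i$ (from the indexing) and $w_j \leq w_i$: if the constraint at $j > i$ were slack, we would have $w_j = W$, and then $w_i \leq W = w_j \leq w_i$ would force $w_i = W$ as well, placing $i$ in the pooling region. The boundary case $\tau_i/\eta = W$ is harmless, since the corresponding index can be assigned to either group without altering the stated form, and the edge cases $t=0$ (all constraints tight) and $t=n$ (no constraints tight, in which case $w_i = W = 1/n$ by $\sum_i w_i = 1$) fall out automatically.
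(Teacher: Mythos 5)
Your proof is correct, and it follows the same overall KKT-based strategy as the paper's (stationarity, $\lambda_i^0=0$ from positivity, and a tight/slack dichotomy giving $w_i=\tau_i/\eta$ versus $w_i=W$), but it differs in two worthwhile ways. First, you dispense with Slater's condition entirely by observing that, for fixed $\eta$, every constraint of Program~\eqref{eq:optim_fixed_eta} is affine in $\w$, so the linear constraint qualification makes the KKT conditions necessary at any optimum of this convex QP; the paper instead devotes a separate case analysis to the non-strictly-feasible situation (showing all privacy constraints must then be tight), which your route renders unnecessary. Note that this shortcut is specific to this model --- in the quasi-linear model of Theorem~\ref{thm:semi_close-form} the participation constraint is genuinely nonlinear in $\w$, so the paper's workaround is needed there. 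Second, to show the pooled agents form a prefix, the paper invokes the monotonicity of $w_i/\tau_i$ from Claim~\ref{clm:positive_monotone} and derives the contradiction $\tau_j \leq W\eta < \tau_j$, whereas you use only the monotonicity $w_1 \geq \cdots \geq w_n$ together with the KKT-derived bound $w_i \leq W$ to sandwich $w_i = W$ for all $i \leq t$; this is slightly more economical since it uses a weaker consequence of Claim~\ref{clm:positive_monotone}, and your explicit handling of the boundary case $\tau_i/\eta = W$ and of the degenerate values $t=0$ and $t=n$ is careful and correct.
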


\begin{proof}[Proof Sketch]
As before, the full proof of the result relies on the Karush–Kuhn–Tucker (KKT) conditions and is provided in Appendix~\ref{app:proof_mainthm_2}. The proof suffers from the same technicality that the KKT conditions require that the optimization program is strictly feasible, and we use the same techniques as for Theorem~\ref{thm:semi_close-form} to circumvent said issue. 
\end{proof}

We note that the above result bears similarities to that of Theorem~\ref{thm:semi_close-form} in the ``quasi-linear'' model. Indeed, note that $\tau_i$ is a parameter that is smaller as the privacy preferences of agent $i$ are more stringent, similarly to $1/c_i$ in the ``quasi-linear'' model. Both solutions then have the same structure: agents with more lax privacy requirements are pooled together and have the same weight, while agents with stronger requirements see their weight decrease as a function of how strong that requirement is. 

\begin{cor}\label{cor:close_form_w}
There exists $t \in \{0,\ldots,n\}$ such that the optimal solution to Program~\ref{eq:optim_fixed_eta} (assuming feasibility) is given by $w_i = \frac{1}{t} \left(1 - \frac{1}{\eta} \sum_{i = t+1}^n \tau_i \right)$ for all $i \leq t$ and $w_i = \frac{\tau_i}{\eta}$ for all $i \geq t+1$.
\end{cor}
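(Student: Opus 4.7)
The plan is to derive the corollary as an immediate consequence of Theorem~\ref{thm:semi_close-form_2} combined with the unbiasedness constraint $\sum_{i=1}^n w_i = 1$. Specifically, Theorem~\ref{thm:semi_close-form_2} already guarantees the existence of some integer $t \in \{0,\ldots,n\}$ and some $W \geq 0$ such that $w_i = W$ for $i \leq t$ and $w_i = \tau_i / \eta$ for $i \geq t+1$. The only thing left to do is to pin down $W$ in terms of $t$ and $\eta$, which the theorem itself does not explicitly provide.

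First, I would write out the normalization constraint using the characterization from Theorem~\ref{thm:semi_close-form_2}:
\[
\sum_{i=1}^t W + \sum_{i=t+1}^n \frac{\tau_i}{\eta} = 1,
\]
which rearranges to $tW = 1 - \frac{1}{\eta} \sum_{i=t+1}^n \tau_i$. Provided $t \geq 1$, I can divide through by $t$ to obtain the claimed closed form
\[
W = \frac{1}{t}\left(1 - \frac{1}{\eta} \sum_{i=t+1}^n \tau_i\right).
\]
This completes the proof in the generic case.

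The only subtlety is the boundary case $t = 0$, where the formula $\frac{1}{t}(\cdots)$ is not well-defined. I would handle this by noting that when $t=0$ the pooling region is empty, so the first part of the stated closed form is vacuous (the set $\{i : i \leq 0\}$ is empty) and only the condition $w_i = \tau_i/\eta$ for $i \in [n]$ matters. In that case, the normalization constraint forces $\sum_{i=1}^n \tau_i = \eta$, and the corollary still holds vacuously. The main (very small) obstacle is simply making this $t=0$ convention explicit so that the statement reads correctly across all values of $t$; beyond that, no new techniques beyond Theorem~\ref{thm:semi_close-form_2} are needed.
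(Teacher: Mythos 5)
Your proof is correct and follows essentially the same route as the paper's: apply Theorem~\ref{thm:semi_close-form_2} to get the two-regime structure, then use the normalization constraint $\sum_{i=1}^n w_i = 1$ to solve for $W$. Your explicit treatment of the $t=0$ boundary case is a small additional care that the paper's proof omits, but it does not change the argument.
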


\begin{proof}
It suffices to use the fact that the weights of the agents must sum to $1$. I.e., $\sum_{i=1}^n w_i = 1$ can be rewritten as 
\[
\sum_{i=1}^t W + \sum_{i = t+1}^n \frac{\tau_i}{\eta} = 1,
\]
or equivalently 
\[
t W + \frac{1}{\eta} \sum_{i = t+1}^n \tau_i = 1.
\]
This immediately leads to $W = \frac{1}{t} \left(1 - \frac{1}{\eta} \sum_{i = t+1}^n \tau_i \right)$.
\end{proof}

\paragraph{Finding the optimal value of $\eta$ exactly} We show that, in fact, $\eta$ can be found by simply minimizing a function of a single variable on a closed interval:
\begin{clm}
The optimal value of $\eta$ is given by
\begin{align}
\eta^* = \arg\min_\eta h(\eta)~\text{s.t.}~ \eta \in \left[\sum_{i=1}^{t+1} \tau_i,t \tau_t + \sum_{i=1}^{t+1} \tau_i\right],
\end{align}
where
\[
h(\eta) 
= \frac{\sigma^2}{t^2} \sum_{i = 1}^t \left(1 - \frac{1}{\eta} \sum_{i = t+1}^n \tau_i \right)^2 + \frac{\sigma^2}{\eta^2} \sum_{i= t + 1}^n \tau_i^2 + \frac{2}{\eta^2}.
\]
\end{clm}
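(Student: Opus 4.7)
The plan is to reduce the joint optimization over $(\w,\eta)$ to a one-dimensional optimization over $\eta$ by plugging the semi-closed form of Corollary~\ref{cor:close_form_w} back into the objective, and then to identify the interval of $\eta$ for which the proposed structure corresponding to a given $t$ remains both feasible and the correct form of the optimum.

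First, I would fix the index $t$ from Theorem~\ref{thm:semi_close-form_2} and substitute the weights $w_i=\tfrac{1}{t}\bigl(1-\tfrac{1}{\eta}\sum_{j=t+1}^n\tau_j\bigr)$ for $i\le t$ and $w_i=\tau_i/\eta$ for $i\ge t+1$ directly into the objective $\sigma^2\sum_i w_i^2+2/\eta^2$ from Program~\ref{eq:optim_fixed_eta}. Expanding the square of the pooled weight and collecting terms produces exactly the function $h(\eta)$ stated in the claim, so that the problem reduces to minimizing the scalar function $h$ over an admissible range of $\eta$.

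Second, I would derive the endpoints of that admissible range from the constraints of Program~\ref{eq:optim_fixed_eta} combined with the ordering properties proved in Claim~\ref{clm:positive_monotone}. The upper endpoint comes from requiring that the privacy constraint of the last pooled agent be satisfied, $W\eta\le\tau_t$, which after substituting the closed form for $W$ rearranges into the stated upper bound on $\eta$. The lower endpoint comes from two requirements: nonnegativity of the pooled weight, $W\ge 0$, and monotonicity $w_t\ge w_{t+1}$, i.e.\ $W\ge\tau_{t+1}/\eta$; both translate into linear lower bounds on $\eta$ of the same form as in the claim, with the binding one producing the stated interval's left endpoint. Outside this interval, either a constraint of Program~\ref{eq:optim_fixed_eta} is violated or the structural partition given by $t$ is no longer correct (one of the agents should migrate between the pooled and capped groups).

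The main subtlety I anticipate is justifying that it suffices to restrict $\eta$ to this closed interval, rather than sweeping over a larger range and relying on infeasibility to rule out bad $\eta$. The cleanest way I see to handle this is to observe that at the two endpoints of the interval the structure for index $t$ coincides with the structure for $t-1$ (at the upper endpoint, where $W\eta=\tau_t$ and agent $t$ can be viewed as being in the capped region) and with the structure for $t+1$ (at the lower endpoint, where $W=\tau_{t+1}/\eta$), so the candidate intervals corresponding to different values of $t$ cover the entire feasible range of $\eta$ and agree on overlaps; hence minimizing $h$ on each interval and taking the best candidate recovers the global optimum. Once this is established, the final step is a routine one-dimensional minimization of $h$ on a closed bounded interval, which can be carried out by computing the stationary points of $h$ (a rational function of $\eta$) and comparing with the two boundary values.
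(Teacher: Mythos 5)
Your proposal follows essentially the same route as the paper: substitute the semi-closed form of Corollary~\ref{cor:close_form_w} into the objective to obtain $h(\eta)$, reduce the pooled agents' privacy constraints to the single binding one $W\eta\le\tau_t$ via the ordering of the $\tau_i$, and obtain the lower endpoint from nonnegativity of $W$, then minimize the resulting one-dimensional function on the closed interval. Your additional observations (the monotonicity-based lower bound $W\ge\tau_{t+1}/\eta$ and the covering argument showing the intervals for different $t$ exhaust the feasible range) are refinements the paper leaves implicit by simply assuming the optimal $t$ is known, and do not change the substance of the argument.
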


\begin{proof}
Assuming the optimal value of $t$ is known, plugging the solution of Corollary~\ref{cor:close_form_w} back into Program~\eqref{eq:optim_threshold} shows that $\eta$ must solve
\begin{align*}
\min_{\eta \geq 0}~& \frac{\sigma^2}{t^2} \sum_{i = 1}^t \left(1 - \frac{1}{\eta} \sum_{i = t+1}^n \tau_i \right)^2 + \frac{\sigma^2}{\eta^2} \sum_{i= t + 1}^n \tau_i^2  + \frac{2}{\eta^2}
\\\text{s.t.}~& \frac{1}{t} \left(1 - \frac{1}{\eta} \sum_{i = t+1}^n \tau_i \right) \eta \leq \tau_i~~\forall i \leq t
\\&\frac{1}{t} \left(1 - \frac{1}{\eta} \sum_{i = t+1}^n \tau_i \right) \geq 0 
\end{align*}
Note that we dropped the constraint that the weights sum to $1$: this is guaranteed to hold for any plugged-in solution of the form given in Corollary~\ref{cor:close_form_w}. Using the fact that $\tau_1 \geq \ldots \geq \tau_t$, we can rewrite the problem as
\begin{align}
\begin{split}
\min_{\eta \geq 0}~& \frac{\sigma^2}{t} \left(1 - \frac{1}{\eta} \sum_{i = t+1}^n \tau_i \right)^2 + \frac{\sigma^2}{\eta^2} \sum_{i= t + 1}^n \tau_i^2 + \frac{2}{\eta^2}
\\\text{s.t.}~& \frac{1}{t} \left(1 - \frac{1}{\eta} \sum_{i = t+1}^n \tau_i \right) \eta \leq \tau_t,
\\&\frac{1}{t} \left(1 - \frac{1}{\eta} \sum_{i = t+1}^n \tau_i \right) \geq 0,
\end{split}
\end{align}
or equivalently 
\begin{align}
\begin{split}\label{eq:optim_eta}
\min_{\eta}~& \frac{\sigma^2}{t} \left(1 - \frac{1}{\eta} \sum_{i = t+1}^n \tau_i \right)^2 + \frac{\sigma^2}{\eta^2} \sum_{i= t + 1}^n \tau_i^2 + \frac{2}{\eta^2}
\\\text{s.t.}~& \eta \leq t \tau_t + \sum_{i=t + 1}^n \tau_i,
\\&\eta \geq \sum_{i=1}^{t+1} \tau_i.
\end{split}
\end{align}
\end{proof}

We further show that this is a simple optimization problem in that $f(\eta)$ is well-behaved and easy to minimize.

\begin{clm}\label{clm:simple_optim_eta}
There exists $\eta^*$ such that $f(\eta)$ is decreasing for $\eta < \eta^*$, and increasing for $\eta > \eta^*$. In turn, $\eta^*$ is the unique solution to $f'(\eta) = 0$, and is given in closed form by 
\[
\eta^* = \frac{\left(\sum_{i = t+1}^n \tau_i\right)^2 + t \sum_{i = t+1}^n \tau_i^2 + \frac{2t}{\sigma^2}}{\sum_{i = t+1}^n \tau_i}.
\]
If $\eta^* \in [\sum_{i=1}^{t+1} \tau_i,t \tau_t + \sum_{i=1}^{t+1} \tau_i]$, it minimizes $f$; otherwise, the minimizer is either $\sum_{i=1}^{t+1} \tau_i$ or $t \tau_t + \sum_{i=1}^{t+1} \tau_i$.
\end{clm}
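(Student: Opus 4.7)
The plan is to analyze $h$ directly by computing its derivative and showing that the sign of $h'(\eta)$ changes exactly once on $(0,\infty)$. Let me abbreviate $S_1 \triangleq \sum_{i=t+1}^n \tau_i$ and $S_2 \triangleq \sum_{i=t+1}^n \tau_i^2$. Expanding the square, $h$ can be written as
\[
h(\eta) = \frac{\sigma^2}{t} - \frac{2\sigma^2 S_1}{t\,\eta} + \frac{\sigma^2 S_1^2}{t\,\eta^2} + \frac{\sigma^2 S_2}{\eta^2} + \frac{2}{\eta^2},
\]
so differentiating term by term yields
\[
h'(\eta) \;=\; \frac{2\sigma^2 S_1}{t\,\eta^2} \;-\; \frac{2}{\eta^3}\!\left(\frac{\sigma^2 S_1^2}{t} + \sigma^2 S_2 + 2\right).
\]

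Factoring out $2/\eta^3$ I would write $h'(\eta) = \frac{2}{\eta^3}\bigl(\frac{\sigma^2 S_1}{t}\,\eta - C\bigr)$, where $C \triangleq \frac{\sigma^2 S_1^2}{t} + \sigma^2 S_2 + 2 > 0$. Since $\eta > 0$, the sign of $h'(\eta)$ coincides with the sign of the affine expression $\frac{\sigma^2 S_1}{t}\eta - C$, which is strictly increasing in $\eta$ and vanishes at a unique point $\eta^* = Ct/(\sigma^2 S_1)$. Substituting in the values of $C$ and simplifying gives exactly
\[
\eta^* \;=\; \frac{S_1^2 + t\,S_2 + \tfrac{2t}{\sigma^2}}{S_1} \;=\; \frac{\bigl(\sum_{i=t+1}^n\tau_i\bigr)^2 + t \sum_{i=t+1}^n \tau_i^2 + \tfrac{2t}{\sigma^2}}{\sum_{i=t+1}^n \tau_i},
\]
matching the stated closed form. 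This argument also immediately gives that $h$ is strictly decreasing on $(0,\eta^*)$ and strictly increasing on $(\eta^*,\infty)$, so $\eta^*$ is the unique global minimizer of $h$ on $(0,\infty)$.

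Finally, to conclude the full statement of the claim, I need to transfer this unconstrained analysis to the constrained problem over the closed interval $I \triangleq [\sum_{i=1}^{t+1}\tau_i,\ t\tau_t + \sum_{i=1}^{t+1}\tau_i]$ derived in the previous display. Since $h$ is strictly decreasing then strictly increasing on $(0,\infty)$, its restriction to $I$ is minimized either at $\eta^*$ (when $\eta^*\in I$) or, by unimodality, at whichever endpoint of $I$ lies on the same side of $\eta^*$: specifically, at the right endpoint if $\eta^* < \min I$ and at the left endpoint if $\eta^* > \max I$. I do not expect any serious obstacle here; the only thing to be careful about is the edge case $S_1 = 0$ (i.e.\ $t = n$), in which $h$ reduces to $(\sigma^2 + 2)/\eta^2 \cdot (\text{constant})$ plus $\sigma^2/t$, which is monotone decreasing, so the minimizer on $I$ is simply its right endpoint; this case should be handled separately at the outset.
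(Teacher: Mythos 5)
Your proposal is correct and follows essentially the same route as the paper's proof in Appendix~\ref{app:simple_optim_eta}: compute $h'(\eta)$, factor out the positive quantity $2/\eta^3$ (the paper factors $2\sigma^2/\eta^3$), observe that the remaining factor is affine and increasing in $\eta$, and read off the unique zero, which matches the stated closed form. Your explicit treatment of the degenerate case $\sum_{i=t+1}^n \tau_i = 0$ (i.e.\ $t=n$), where the closed-form expression for $\eta^*$ is undefined and $h$ is monotone decreasing, is a point the paper's proof does not address and is a worthwhile addition.
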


The proof follows from simple algebra and is given in Appendix~\ref{app:simple_optim_eta}. We note that the above characterization gives us an immediate algorithm to find the optimal $\eta$. Indeed, it suffices to explore all $n$ possible values of $t$. For each $\eta$, then, one only has to compare $f(\eta^*)$, $f\left(\sum_{i=1}^{t+1} \tau_i\right)$ and $f \left(t \tau_t + \sum_{i=1}^{t+1} \tau_i\right)$. This algorithm takes time $O(n)$.

\paragraph{Incentive properties of the ``privacy-constrained'' model} Finally, we remark that the privacy-constrained model enjoys nice incentive properties: e.g., it is a weakly dominated strategy for agents to misreport their privacy thresholds. We note that this property holds \emph{without having to pay agents to report their privacy preferences truthfully}. This is a major advantage in that it reflects what happens in real-life platforms, who often do not pay their users to incentivize participation; in fact, platforms commonly ask users to pay to be able to access the service they offer in return. We divide the incentive properties in the following two claims:

\begin{clm}
For any agent $i$, reporting $\htau_i < \tau_i$ is a weakly dominated strategy. 
\end{clm}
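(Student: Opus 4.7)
The plan is to fix the reports $\hat{\tau}_{-i}$ of all agents other than $i$ and to show that, for every such profile, the utility agent $i$ obtains when reporting truthfully is weakly larger than the utility obtained when reporting $\htau_i<\tau_i$, with strict improvement on at least one profile. Write $V(\htau_i,\htau_{-i})$ for the optimal objective value of Program~\eqref{eq:optim_threshold_final} when the reported privacy threshold of $i$ is $\htau_i$ and those of the others are $\htau_{-i}$, and write $(\w^{\htau_i},\eta^{\htau_i})$ for any platform-optimal choice at that report profile.

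First, I would compare the feasible sets of the platform's optimization under the two reports. The only constraint that changes is the one pertaining to agent $i$: it becomes $w_i\eta\le \htau_i$ under the misreport, which is strictly tighter than $w_i\eta\le \tau_i$ under truthful reporting. Every feasible point under $\htau_i$ is therefore feasible under $\tau_i$, and so $V(\tau_i,\htau_{-i})\le V(\htau_i,\htau_{-i})$. Since $g$ is decreasing (Section~\ref{sec: variant_model_description}), this already shows that the ``variance part'' of agent $i$'s utility under truthful reporting is weakly better than under under-reporting.

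Next I would verify that agent $i$'s privacy constraint $c_i w_i \eta\le B_i$ (equivalently $w_i\eta\le\tau_i$) is satisfied in both cases, so that the agent indeed obtains $g(V)$ and not $-\infty$. In the truthful case this holds by construction of the platform's program. In the misreport case, the platform enforces $w_i^{\htau_i}\eta^{\htau_i}\le \htau_i<\tau_i$, so the true constraint is satisfied a fortiori. Combining the two observations, the agent's utility under truthful reporting is $g(V(\tau_i,\htau_{-i}))\ge g(V(\htau_i,\htau_{-i}))$, which is the weak-dominance inequality for every profile $\htau_{-i}$.

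For strict improvement on at least one profile, I would exhibit an instance where the extra constraint $w_i\eta\le\htau_i$ binds at the platform's optimum while $w_i\eta\le \tau_i$ does not; the cleanest choice is to take all other agents with very large $\tau_j$ so that Theorem~\ref{thm:semi_close-form_2} places agent $i$ in the pooling region under truthful reporting (where $w_i\eta<\tau_i$), while under-reporting to $\htau_i<\tau_i$ pushes $i$ out of that region and forces $w_i\eta=\htau_i$, strictly increasing the variance and thus strictly decreasing $g(V)$. The only subtle point, and the main thing to be careful about, is the edge case where Program~\eqref{eq:optim_threshold_final} under the misreport becomes infeasible while under truthful reporting it remains feasible; this can be handled separately by observing that any natural convention for the infeasible case (no computation or fallback outside option) gives weakly less utility than the feasible truthful case, preserving the weak-dominance conclusion.
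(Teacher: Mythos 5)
Your argument is correct and takes essentially the same route as the paper's proof: both rest on the observation that under-reporting only tightens agent $i$'s constraint in Program~\eqref{eq:optim_threshold_final}, so the optimal variance weakly increases and, since $g$ is decreasing, the agent's utility weakly decreases, while the true privacy constraint $w_i\eta\le\htau_i<\tau_i$ remains satisfied a fortiori. The only difference is that you additionally sketch a profile with strict improvement (as the formal definition of weak dominance requires) and flag the infeasibility edge case, neither of which the paper's proof addresses.
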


\begin{proof}
Fix the participation strategy of all the other agents--let $S_{-i}$ the set of agents that decide to join the platform and report their data--, as well as their reports $\htau_j$ for all $j \in S_{-i}$. Suppose the set of participating agents is $S = S_{-i} \cup {i}$ (i.e. $i$ decides to participate) and $\htau_i < \tau_i$. Let $OPT\left(S,\bold{\htau}\right)$ be the optimal objective value of Program~\eqref{eq:optim_threshold_final} when the inputs are $S,\bold{\htau}$. We immediately have that  
\[
OPT\left(S,\bold{\htau}\right) \geq OPT\left(S,\left(\tau_i,\bold{\htau}_{-i}\right)\right)
\]
by virtue of the left-hand side optimization program being strictly more constrained. Since agent $i$'s privacy constraint is always satisfied when reporting his true threshold (by construction of Program~\eqref{eq:optim_threshold_final}), agent $i$ gets utility 
$g\left(OPT\left(S,\left(\tau_i,\bold{\htau}_{-i}\right)\right)\right)$. 
This is at least as high (by virtue of $g$ being decreasing) as the utility agent $i$ gets from misreporting as above, since then agent $i$ gets utility either $g\left(OPT\left(S,\bold{\htau}\right)\right)$ or $-\infty$ if his true privacy constraint is not satisfied.
\end{proof}

\begin{clm}
Fix any agent $i$. Reporting $\htau_i > \tau_i$ cannot increase agent $i$'s utility.
\end{clm}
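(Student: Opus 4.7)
The plan is to mirror the structure of the preceding claim (misreporting low) but to handle the extra subtlety that, for an upward misreport, the platform's optimum could actually violate agent $i$'s true privacy constraint. First I would fix the participation decisions and reports of all other agents (set $S_{-i}$ and thresholds $\bold{\htau}_{-i}$), and condition on agent $i$ choosing to participate (otherwise there is nothing to prove, since non-participation is an option equally available under truthful reporting). Let $S = S_{-i} \cup \{i\}$, and write $\text{OPT}(S,\bold{\htau})$ for the optimal value of Program~\eqref{eq:optim_threshold_final} parameterized by the reports $\bold{\htau}$. Since the misreport $\htau_i > \tau_i$ only weakens the $i$-th privacy constraint, the feasible region of the misreport program contains that of the truthful program, so
\[
\text{OPT}(S,\bold{\htau}) \leq \text{OPT}\bigl(S,(\tau_i,\bold{\htau}_{-i})\bigr).
\]

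Next I would do a two-case analysis on the platform's optimal solution $(\w^*,\eta^*)$ under the misreport. In Case (a), $w_i^* \eta^* \leq \tau_i$: then $(\w^*,\eta^*)$ is also feasible for the truthful program, so the reverse inequality holds and we get $\text{OPT}(S,\bold{\htau}) = \text{OPT}(S,(\tau_i,\bold{\htau}_{-i}))$; since agent $i$'s true privacy constraint is respected, the agent's utility is $g\bigl(\text{OPT}(S,\bold{\htau})\bigr) = g\bigl(\text{OPT}(S,(\tau_i,\bold{\htau}_{-i}))\bigr)$, which is exactly the utility from truthful reporting. In Case (b), $w_i^* \eta^* > \tau_i$: the platform's chosen weight breaches agent $i$'s true privacy budget $B_i = c_i \tau_i$, so by definition of the privacy-constrained utility the agent gets utility $-\infty$, strictly worse than truthful reporting.

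Combining the two cases, the utility from any upward misreport is at most the utility from truthful reporting, which is exactly the statement of the claim. The main potential obstacle is the delicate dependence on whether the platform's optimal weight for $i$ happens to lie above or below $\tau_i$; but the convenient feature of this model is that we do not need to compute that weight, since in Case (a) the feasibility-plus-optimality sandwich forces equality of the two optimal values without further analysis, and in Case (b) the utility is automatically driven to $-\infty$. A minor issue to flag (though it requires no real work) is that the optimal $(\w^*,\eta^*)$ under the misreport may not be unique in $\eta$; this is harmless because for any optimizer, exactly one of the two cases applies and yields the desired bound.
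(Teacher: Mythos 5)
Your proposal is correct and follows essentially the same two-case argument as the paper: if the platform's optimal weight under the inflated report satisfies $w_i \eta \leq \tau_i$ the agent's utility is unchanged, and otherwise the true privacy constraint is violated and the utility drops to $-\infty$. Your sandwich argument in Case (a) is in fact a slightly more explicit justification of the equality $OPT(S,\bold{\htau}) = OPT(S,(\tau_i,\bold{\htau}_{-i}))$ than the paper's terse statement that the two programs are ``equivalent,'' but the underlying reasoning is the same.
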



\begin{proof}
Let us once again fix $S_{-i}$ and $\bold{\tau}_{-i}$. We have two cases for agent $i$:
\begin{itemize}
    \item In $\left(S,\bold{\htau}\right)$, agent $i$ receives $w_i \eta \leq \tau_i$. In this case, note that the optimization program with threshold $\tau_i$ and $\hat{\tau}_i$ are equivalent and
    \[
    OPT\left(S,\bold{\htau}\right) = OPT\left(S,\left(\tau_i,\bold{\htau}_{-i}\right)\right).
    \]
    In this case, agent $i$'s privacy constraint is satisfied and he gets utility 
    $g \left(OPT\left(S,\bold{\htau}\right)\right) = g\left(OPT\left(S,\left(\tau_i,\bold{\htau}_{-i}\right)\right)\right)$. I.e., his utility is unchanged. 
    \item Otherwise, agent $i$'s privacy constraint is not satisfied and he receives the worst possible utility of $-\infty$.
\end{itemize}
This concludes the proof. 
\end{proof}

\bibliographystyle{plainnat}
\bibliography{refs.bib}

\appendix

\section{Proofs of the Main Theorems}
\subsection{Proof of Theorem~\ref{thm:semi_close-form}}\label{app:proof_mainthm}

First, we consider the case in which Program~\ref{eq:weighted_optim_final} is not strictly feasible; i.e., there exists no feasible weights $\w$ such that for all $i$,
\[
c_i w_i \eta < f \left(\sigma^2 \sum_{i = 1}^n w_i^2 + \frac{2}{\eta^2}\right) - \outopt.
\]

\begin{clm} 
Suppose Program~\ref{eq:weighted_optim_final} is feasible, but has no strictly feasible solution. Then any feasible, hence the optimal solution is given by $w_i = K/c_i$ for all $i$ and for some constant $K$. 
\end{clm}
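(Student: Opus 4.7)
The plan is to reduce the claim to showing that every feasible $(\w,\eta)$ must have all $n$ participation constraints simultaneously tight. Once that is established, all constraints read $c_i w_i \eta = M$ with the same right-hand side $M := f(\sigma^2 \sum_j w_j^2 + 2/\eta^2) - \outopt$ which does not depend on $i$; dividing by $c_i \eta$ yields $w_i = K/c_i$ for the common constant $K := M/\eta$. The normalization $\sum_i w_i = 1$ further pins down $K = 1/\sum_i (1/c_i)$ in closed form, so all feasible solutions share this structure.

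To obtain ``all constraints tight,'' I argue the contrapositive: if some feasible $(\w,\eta)$ has a slack participation constraint, then a strictly feasible solution can be produced. The key structural observation is that each function $h_i(\w,\eta) := c_i w_i \eta - f(\sigma^2 \sum_j w_j^2 + 2/\eta^2) + \outopt$ is strictly convex in $\w$ at fixed $\eta$: it is linear in $\w$ plus $-f$ applied to the strictly convex quantity $\sigma^2 \sum_j w_j^2 + 2/\eta^2$, and $-f$ is convex and (strictly) increasing since $f$ is concave and strictly decreasing. Consequently, if the $\w$-feasible set at a given $\eta$ contains two distinct points $\w$ and $\w'$, their midpoint satisfies $h_i\bigl((\w+\w')/2\bigr) < \tfrac12 (h_i(\w) + h_i(\w')) \leq 0$ strictly for every $i$, which is strict feasibility. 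Under the no-strict-feasibility hypothesis, the $\w$-feasible set at every $\eta$ must therefore be empty or a singleton.

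The remaining case is the main obstacle: a singleton $\{\w^*\}$ at some $\eta$ containing a slack constraint $h_{i_0}(\w^*,\eta) < 0$. A useful preliminary is that every tight index $j \in T := \{k : h_k(\w^*,\eta) = 0\}$ must satisfy $w^*_j > 0$: otherwise $h_j(\w^*) = -M$ would force $M = 0$, and the other participation constraints would then require $w^*_k = 0$ for all $k$, contradicting $\sum_k w^*_k = 1$. Using this, I plan to exhibit a perturbation $\delta$ on the simplex tangent plane $\{\sum_i \delta_i = 0\}$ such that $\nabla_\w h_j(\w^*, \eta) \cdot \delta < 0$ for every $j \in T$, producing a second feasible point $\w^* + t\delta$ and contradicting the singleton conclusion. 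Writing $\nabla_\w h_j(\w^*, \eta) \cdot \delta = c_j \eta\, \delta_j + \alpha \sum_k w^*_k \delta_k$ with $\alpha := -2\sigma^2 f'(\sigma^2 \sum_j (w^*_j)^2 + 2/\eta^2) \geq 0$, the natural candidate sets $\delta_{i_0}$ positive and $\delta_j \propto -w^*_j$ on $T$ (and zero elsewhere): the shared term $\alpha \sum_k w^*_k \delta_k$ can be controlled in sign while the agent-specific $c_j \eta \delta_j$ is strictly negative on $T$, and admissibility on the simplex for small $t > 0$ follows from $w^*_j > 0$ on $T$. Combining this with the previous paragraph, every feasible $(\w,\eta)$ has $T = [n]$, and the first paragraph then finishes the proof.
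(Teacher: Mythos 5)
Your first two paragraphs are correct, and the midpoint argument is a genuinely different route from the paper's: instead of hand-crafting a perturbation of one feasible point, you use strict convexity of each $h_i$ in $\w$ (granting that ``decreasing'' means strictly decreasing, and $c_i>0$ for your closed form for $K$) to show that two distinct feasible weight vectors would average to a strictly feasible one, so under the hypothesis the feasible set at the given $\eta$ is a single point. This is cleaner than the paper's argument and correctly isolates where the difficulty lives.

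The singleton case, however, is a genuine gap that cannot be closed, because the claim fails exactly there. Your plan requires a direction $\delta$ with $\sum_i\delta_i=0$ and $\nabla_\w h_j(\w^*)\cdot\delta<0$ for every tight $j$, but a singleton feasible set with a slack constraint arises precisely when $\w^*$ minimizes some tight $h_j$ over the simplex with minimum value $0$; then the projection of $\nabla_\w h_j(\w^*)$ onto $\{\sum_i\delta_i=0\}$ is zero and $\nabla_\w h_j(\w^*)\cdot\delta=0$ for \emph{every} admissible $\delta$, so no choice of signs or magnitudes rescues the candidate perturbation. Concretely, take $n=2$, $f(v)=a-bv$ with $b>0$, costs $0<c_1<c_2$, and write $w_2=s$, $w_1=1-s$. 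Then $h_2(s)=c_2\eta s+b\sigma^2\bigl(s^2+(1-s)^2\bigr)+\tfrac{2b}{\eta^2}+\outopt-a$ is a strictly convex quadratic minimized at $s^*=\tfrac12-\tfrac{c_2\eta}{4b\sigma^2}$; choose $0<c_2\eta<2b\sigma^2$ so that $s^*\in(0,\tfrac12)$, choose $a$ so that $h_2(s^*)=0$, and choose $c_1<c_2\,s^*/(1-s^*)$. The feasible set is exactly $\{(1-s^*,\,s^*)\}$, so the program is feasible but not strictly feasible; yet agent $1$'s constraint is slack at this point and $c_1w_1\eta\neq c_2w_2\eta$, so no single constant $K$ gives $w_i=K/c_i$ for both agents. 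The statement you were asked to prove is therefore false as written (the surviving structure is that of Theorem~\ref{thm:semi_close-form}: a pooled weight $w_1=W$ followed by a tight weight $w_2=K/c_2$).

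For calibration, the paper's own proof of this claim has the same soft spot: in its Case~1 it bounds the perturbed left-hand side $c_j\tilde{w}_j\eta$ against $f$ evaluated at the \emph{old} weights, but the right-hand side also moves at first order in $\varepsilon$; in the instance above the paper's perturbation gives $h_2(s^*-\varepsilon)=2b\sigma^2\varepsilon^2>0$, i.e., the perturbed point is infeasible. So you did not miss an easy finish --- the degenerate singleton is a real obstruction, and any correct treatment has to either add a nondegeneracy assumption or handle partially-tight singletons directly (e.g., via Fritz John conditions) rather than forcing all constraints tight.
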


\begin{proof}
First, suppose the program is not strictly feasible. In particular, let us look at $\w$ the optimal solution to the program. We consider two cases:
\begin{enumerate}
\item There exists $i$ such that 
\[
c_i w_i \eta < f \left(\sigma^2 \sum_{i = 1}^n w_i^2 + \frac{2}{\eta^2}\right) - \outopt.
\]
Let $\tilde{w}_i = w_i + (n-1) \varepsilon$ and $\tilde{w}_j = w_j - \varepsilon$ for all $j \neq i$. Note that for $\varepsilon$ small enough, the $\tilde{\w}$ define proper weights: they are positive, since by Corollary~\ref{cor:full_participation}, $w_j > 0$ for all $j$, and they sum to $1$ by construction. Further, 
\[
c_j \tilde{w}_j \eta < c_j w_j \eta \leq f \left(\sigma^2 \sum_{i = 1}^n w_i^2 + \frac{2}{\eta^2}\right) - \outopt~~\forall j \neq i,
\]
hence the participation constraints are strict for all $j \neq i$. Finally, for $\varepsilon$ small enough, by continuity of $f$, we have 
\[
c_i \tilde{w_i} \eta < f \left(\sigma^2 \sum_{i = 1}^n \tilde{w}_i^2 + \frac{2}{\eta^2}\right) - \outopt,
\]
since the left-hand side converges to $c_i w_i \eta$ and the right-hand side to $f \left(\sigma^2 \sum_{i = 1}^n w_i^2 + \frac{2}{\eta^2}\right) - \outopt$ when $\varepsilon$ goes to $0$. Hence, $\tilde{\w}$ is a strictly feasible solution, which is a contradiction.
\item For all $i$, the participation constraint is tight, i.e. 
\[
c_i w_i \eta = f \left(\sigma^2 \sum_{i = 1}^n w_i^2 + \frac{2}{\eta^2}\right) - \outopt.
\]
Then $w_i = K/c_i$ for all $i$, where $K \triangleq \frac{f \left(\sigma^2 \sum_{i = 1}^n w_i^2 + \frac{2}{\eta^2}\right) - \outopt}{\eta}$ is the same for all agents $i$.
\end{enumerate}
\end{proof}

Now, we can consider without loss of generality the case in which the program is strictly feasible. In this case, Slater's condition holds and we can apply the KKT conditions. Note that the Lagrangian of Program~\eqref{eq:weighted_optim_fixedeta} is given by 
\begin{align*}
\mathcal{L}(\w,\vec{\lambda},\vec{\lambda^0},\gamma)
= \sigma^2 \sum_{i =1}^{n} w_i^2 + \frac{2}{\eta^2} + \gamma \left(1 - \sum_{i =1}^{n} w_i\right)
- \sum_{i=1}^{n} \lambda_i^0 w_i
+ \sum_{i = 1}^{n} \lambda_i \left( c_i w_i \eta - f\left(\sigma^2  \sum_{j = 1}^{n} w_j^2 + \frac{2}{\eta^2}\right) + \outopt)\right).
\end{align*}
The first order condition, taking the derivative with respect to $w_i$ (remembering that $f$ is differentiable by assumption), is given by 
\begin{align*}
0 
= 2 \sigma^2 w_i + \lambda_i c_i \eta - \gamma - \lambda_i^0
- 2 \sigma^2 \left(\sum_j \lambda_j\right) w_i \cdot f' \left(\sigma^2  \sum_{j = 1}^{n} w_j^2 + \frac{2}{\eta^2}\right).
\end{align*}
This implies $w_i = \frac{-\lambda_i c_i \eta + \gamma + \lambda_i^0}{2 \sigma^2 \left(1 - \left(\sum_j \lambda_j\right) \cdot  f' \left(\sigma^2  \sum_{j = 1}^{n} w_j^2 + \frac{2}{\eta^2}\right) \right)}$. 

Since $w_i > 0$ for all $i$, complementary slackness yields that $\lambda_i^0 = 0$ for all $i$. The first order condition then simplifies to 
\[
w_i = \frac{-\lambda_i c_i \eta + \gamma}{2 \sigma^2 \left(1 - \left(\sum_j \lambda_j\right) \cdot  f' \left(\sigma^2  \sum_{j = 1}^{n} w_j^2 + \frac{2}{\eta^2}\right) \right)}.
\]
We now have two cases:
\begin{enumerate}
\item Either $c_i w_i \eta = f \left(\sigma^2 \sum_{i = 1}^n w_i^2 + \frac{2}{\eta^2}\right) - \outopt$, i.e. the participation constraint is tight. Then we can write 
\[
w_i = \frac{f \left(\sigma^2 \sum_{i = 1}^n w_i^2 + \frac{2}{\eta^2}\right) - \outopt}{\eta c_i} \triangleq \frac{K}{c_i},
\]
where $K$ is a constant in that it is the same for all agents.
\item Either $w_i$ is such that $i$'s participation constraint is not tight. Then, by complementary slackness, we have that $\lambda_i = 0$, hence we can rewrite 
\begin{align*}&w_i 
 = \frac{\gamma}{2 \sigma^2 \left(1 - \left(\sum_j \lambda_j\right) \cdot  f' \left(\sigma^2  \sum_{j = 1}^{n} w_j^2 + \frac{2}{\eta^2}\right) \right)} 
 \triangleq W
\end{align*}
is a constant that is the same for all agents.
\end{enumerate}
Therefore, in any optimal solution, there exists constants $K$ and $W$ such that either $w_i = K/c_i$, or $w_i = W$. To conclude the proof, suppose that $i < j$, but $w_i$ satisfies case (1) above (and $w_i = K/c_i$) while $w_j$ satisfies case (2) (and $w_j = W$). We have that $w_j < K/c_j$ since the participation constraint $c_j w_j \leq K$ is not tight for $j$ by definition of case (2). Hence, $c_i w_i = K$ while $c_j w_j < K$, implying that $c_i w_i > c_j w_j$. This contradicts Claim~\ref{clm:monotone_equilibrium_costs}. Therefore, for any $j$, if $w_j = W$, it must be that $w_i = W$ for all $i < j$. This concludes the proof.

\subsection{Proof of Theorem~\ref{thm:semi_close-form_2}}\label{app:proof_mainthm_2}

First, we consider the case in which Slater's condition does not hold, and there exists at least one $i$ such that $w_i \eta = \tau_i$ in any feasible solution. We have two cases:
\begin{enumerate}
    \item There exists $j$ such that $w_j \eta < \tau_j$. Then let $\tilde{\w}$ be such that $\tilde{w}_i = w_i - \varepsilon$ for all $i \neq j$ and let $\tilde{w}_j = w_j + (n-1) \varepsilon$. When $\varepsilon$ is small enough, $\tilde{\w} \geq 0$ (noting that we have $\w > 0$ at an optimal solution by Claim~\ref{clm:positive_monotone}), the weights sum to $1$, $\tilde{w_i} \eta < (w_i \eta \leq) \tau_i$ for all $i \neq j$, and $\tilde{w}_j \eta < \tau_j$. Hence $\tilde{w}$ is strictly feasible, which is a contradiction.
    \item For all $i$, $w_i \eta = \tau_i$. Then the optimal solution is fully determined by these equations, and satisfies $w_i = \frac{\tau_i}{\eta}$ for all $i$. 
\end{enumerate}

Now, suppose we have strict feasibility, i.e. Slater's condition holds. The Lagrangian of the optimization problem is given by
\begin{align*}
\mathcal{L}(\w,\lambda,\lambda^0,\mu) 
= \sum_i w_i^2 \sigma^2 + \frac{2}{\eta^2} + \sum_i \lambda_i \left( w_i \eta - \tau_i \right) 
+ \mu \left(\sum_i w_i - 1\right) - \sum_i \lambda_i^0 w_i.
\end{align*}
The first order condition (with respect to agent $i$) is then given by 
\[
2 w_i \sigma^2 + \lambda_i \eta + \mu -\lambda_i^0 = 0,
\]
which implies
\[
w_i = \frac{\lambda_i \eta + \mu - \lambda_i^0}{2 \sigma^2}.
\]
By Claim~\ref{clm:positive_monotone}, $w_i > 0$, hence by KKT conditions, $\lambda_i^0 = 0$. Therefore, we can rewrite 
\[
w_i = \frac{\lambda_i \eta + \mu}{2 \sigma^2}.
\]
We now have two cases: 
\begin{enumerate}
\item Either agent $i$'s privacy constraint is tight. Then, $w_i \eta = \tau_i$, i.e. $w_i = \frac{\tau_i}{\eta}$.
\item Otherwise, the privacy constraint is not tight. Then, by the KKT conditions, it must be that $\lambda_i = 0$, hence $w_i = \frac{\mu}{2 \sigma^2} = W$ for some constant $W$. Since the privacy constraint is not tight, we have in particular that $W < \frac{\tau_i}{\eta}$.
\end{enumerate}

To complete the proof, suppose that we have $i < j$ such that $i$ is in case (1) and $w_i = \tau_i/\eta$, while $j$ is in case (2) and $w_j = W$. We have that $w_i/\tau_i \leq w_j/\tau_j$ by Claim~\ref{clm:positive_monotone}. This then implies that $1/\eta \leq W/\tau_j$, and in turn that $\tau_j \leq W \eta < \tau_j$ (remember that since $j$ is in case (2), $W \eta < \tau_j$). This is a contradiction. Hence, it must be the case that if if $w_i$ is in case (1), we must have $w_j = w_j/\eta$ for all subsequent $j > i$.

\section{Proofs of Supporting Claims}
\subsection{In the Quasi-Linear Utility Model}\label{app:proof_claims_quasilinear}

\subsubsection{Proof of Claim~\ref{clm: optim_final}}
Pick any $\w, \eta$, and let $S = \{i~\text{s.t.}~w_i > 0\}$. We first note that
\[
\sigma^2 \sum_{i \in S} w_i^2 + \frac{2}{\eta^2} = \sigma^2 \sum_{i \in [n]} w_i^2 + \frac{2}{\eta^2}
\]
since $w_i = 0$ for all $i \notin S$. Hence, $(\w,\eta,S)$ achieves the same objective value for Program~\ref{eq:weighted_optim} as $(\w,\eta)$ for Program~\ref{eq:weighted_optim_final} for any $\w, \eta$, and $S$ constructed as above. 

Second, with respect to Program~\ref{eq:weighted_optim}, we have that:
 \begin{enumerate}
 \item $\sum_{i \in S} w_i = 1 \Leftrightarrow \sum_{i \in [n]} w_i = 1$ by virtue of $w_i = 0$ for all $i \notin S$.
 \item Since we have that 
 \[
 f \left(\sigma^2 \sum_{i \in [n]} w_i^2 + \frac{2}{\eta^2}\right) = f \left(\sigma^2 \sum_{i \in S} w_i^2 + \frac{2}{\eta^2}\right),
 \]
 for all $i$, $c_i w_i \eta \leq f \left(\sigma^2 \sum_{i \in [n]} w_i^2 + \frac{2}{\eta^2}\right) - \outopt$ if and only if $c_i w_i \eta \leq f \left(\sigma^2 \sum_{i \in S} w_i^2 + \frac{2}{\eta^2}\right) - \outopt$. Therefore $(\w,\eta,S)$ is feasible for Program~\ref{eq:weighted_optim} if and only $(\w,\eta)$ is feasible for Program~\ref{eq:weighted_optim_final}. 
\end{enumerate}

 This is enough to conclude the proof. Indeed, since $(\w,\eta)$ feasible for Program~\eqref{eq:weighted_optim_final}$(\w,\eta,S)$ implies $(\w,\eta,S)$ feasible for Program~\ref{eq:weighted_optim} and they both have the same objective value, the optimal value of Program~\eqref{eq:weighted_optim} is at least that of Program~\eqref{eq:weighted_optim_final}. Vice-versa, the optimal value of Program~\ref{eq:weighted_optim_final} is at least that of Program~\ref{eq:weighted_optim}. Hence, Program~\ref{eq:weighted_optim} and Program~\ref{eq:weighted_optim_final} have the same optimal value. Further, if $(\w,\eta)$ is optimal, then $(\w,\eta,S)$ is optimal by virtue of having the same objective value, and vice-versa. This concludes the proof. 
 
\subsubsection{Proof of Claim~\ref{clm:monotone_w}}
 
Let $\w$ be an optimal solution to Program~\eqref{eq:weighted_optim_fixedeta}. Suppose there exists $i < j$ such that $w_i < w_j$. Now, let us look at a possible alternative solution $\tilde{\w}$ where $\tw_i \triangleq w_i + \varepsilon,~\tw_j \triangleq w_j - \varepsilon$ for $\varepsilon > 0$ small enough, and $\tw_{k} \triangleq w_{k}$ for all agents $k \neq i,j$. We will show that this solution leads to a smaller objective, contradicting optimality. 

First, 
\begin{align*}
\sum_{k=1}^n \tw_k^2 - \sum_{k=1}^n w_k^2
&= (w_i + \varepsilon)^2 - w_i^2 + (w_j - \varepsilon)^2 - w_j^2
\\&= 2 w_i \varepsilon + \varepsilon^2 + (\varepsilon^2 - 2w_j \varepsilon)
\\&= 2 \varepsilon (w_i - w_j + \varepsilon)
\\& < 0,
\end{align*}
for $\varepsilon$ small enough, as $w_i < w_j$. This shows that $\tw$ leads to a better variance than $\w$, since it directly implies 
\[
\sum_{k = 1}^n \tw_k^2 \sigma^2 + \frac{2}{\eta^2} < \sigma^2 \sum_{k = 1}^n w_i^2+ \frac{2}{\eta^2}.
\]

Second, since $c_i \leq c_j$, $\tw_j < w_j$, for $\varepsilon$ small enough we have $\tw_i \leq \tw_j$, and because $f$ is decreasing, it follows that
\begin{align*}
c_i \tw_i \eta \leq c_i \tw_j \eta <  c_j w_j \eta 
\leq f \left(\sigma^2 \sum_{i = 1}^n w_i^2 + \frac{2}{\eta^2}\right)  
 \leq f \left(\sigma^2 \sum_{i = 1}^n \tw_i^2 + \frac{2}{\eta^2}\right).
\end{align*}
Further, because $\tw_j < w_j$, we have
\begin{align*}
c_j \tw_j \eta  < c_j w_j \eta 
\leq f \left(\sigma^2 \sum_{i = 1}^n w_i^2 + \frac{2}{\eta^2}\right)  
\leq f \left(\sigma^2 \sum_{i = 1}^n \tw_i^2 + \frac{2}{\eta^2}\right).
\end{align*}
Therefore, $\tilde{\w}$ is feasible, since it satisfies the participation constraints and that the weights are still positive and sum to $1$. This concludes the proof.
 
\subsubsection{Proof of Claim~\ref{clm:monotone_equilibrium_costs}}
 
Let $\w$ be an optimal solution with $c_i w_i \eta  > c_j w_j \eta$; in particular, it must be that $w_i > w_j$ since $c_i \leq c_j$. For small enough $\varepsilon$, let $\tw_i = w_i - \varepsilon$, $\tw_j = w_j + \varepsilon$, and $\tw_k = w_k$ i for any other agent $k \neq i,j$. First, we note that this transformation decreases the variance. Indeed,  
\begin{align*}
w_i^2 + w_j^2 - \tw_i^2 - \tw_j^2 
&=  w_i^2 + w_j^2 -w_i^2 + 2 \varepsilon w_i - \varepsilon^2 - w_j^2 - 2\varepsilon w_j -\varepsilon^2 
\\&= 2 \varepsilon(w_i - w_j) - 2\varepsilon^2
\\&= 2 \varepsilon (w_i - w_j - \varepsilon)
\\&> 0
\end{align*}
when $\varepsilon$ is small enough, by virtue of $w_i > w_j$. Further, the constraints that the weights must sum to $1$ still holds, as well as the non-negativity constraint so long as $\varepsilon$ is small enough (smaller than $w_i$). Finally, $c_i \tw_i \eta = c_i (w_i - \varepsilon) \eta \leq c_i w_i \eta$, and $c_j \tw_j \eta = c_j (w_j \eta + \varepsilon \eta) \leq c_i w_i \eta$ for small enough $\varepsilon$ (as $c_j w_j \eta  < c_i w_i \eta$); combining this with the fact that the variance decreases, the participation constraints still holds. Therefore, $\tilde{\w}$ is feasible for Program~\ref{eq:weighted_optim_final} and has strictly better objective value than an optimal solution, which is a contradiction.

\subsection{In the Privacy-Constrained Utility Model}
\subsubsection{Proof of Claim~\ref{clm:positive_monotone}}

We show the results in the claim by contradiction. First, suppose there exists $i$ such that $w_i = 0$. We will show that we can construct alternative weight vector $\w$ that is feasible and leads to a strictly better objective value, contradicting optimality of $\w$. To do so, let $j$ be such that $w_j > 0$, and let $\tw_i = \varepsilon$, $\tw_j = w_j - \varepsilon$, and $\tw_j = w_j$ for all $k \neq i,j $. For $\varepsilon$ small enough, note that $\tilde{\w}$ is feasible: all weights remain non-negative, sum to $1$, $\tw_j \eta \leq w_j \eta \leq \tau_j$, and $\tw_i \eta = \varepsilon \eta \leq \tau_i$ so as long as $\varepsilon$ is sufficiently small. Further, the objective value under $\tilde{\w}$ is smaller than under $\w$. Indeed, the change in variance (renormalized by $1/\sigma^2$) is given by
\begin{align*}
w_i^2 - \tw_i^2 + w_j^2 - \tw_j^2
&= -\varepsilon^2 + w_j^2 - (w_j - \varepsilon)^2
\\&= -\varepsilon^2 + w_j^2  - w_j^2 + 2 w_j \varepsilon - \varepsilon^2 
\\&=2 \varepsilon (w_j - \varepsilon)
\\&> 0.
\end{align*}
where the last inequality follows from $\varepsilon$ being small enough. This is a contradiction.

Second, suppose there exists $i < j$ such that $w_i < w_j$. Let us construct as before an alternative weight vector $\tilde{\w}$ such that $\tw_i = w_i + \varepsilon$, $\tw_j = w_j - \varepsilon$, and $\tw_k = w_k$ for all $k \neq i,j$. First, $\tilde{\w}$ is feasible: $\sum_i \tw_i = 1$, $\tw_j \geq 0$ for $\varepsilon$ small enough (since $w_j > 0$), $\tw_i \eta <  w_j \eta \leq \tau_j \leq \tau_i$ for $\varepsilon$ small enough (as $w_i < w_j$), and $\tw_j \eta \leq w_j \eta \leq \tau_j$. Further, $\tilde{\w}$ yields better variance than $\w$. Indeed, 
\begin{align*}
\sum_{i =1}^n w_i^2 - \sum_{i =1}^n \tw_i^2 
&= w_i^2 - \tw_i^2 + w_j^2 - \tw_j^2
\\&= w_i^2 - (w_i + \varepsilon)^2 + w_j^2 - (w_j - \varepsilon)^2 
\\&= w_i^2  - w_i^2  - \varepsilon^2 - 2 w_i \varepsilon + w_j^2  - w_j^2 + 2 w_j \varepsilon - \varepsilon^2 
\\&=2 \left(w_j - w_i - \varepsilon\right)
\\&> 0
\end{align*}
for $\varepsilon$ small enough, remembering that $w_j > w_i$. This is a contradiction.

Finally, suppose there exists $i < j$ such that $\frac{w_i}{\tau_i} > \frac{w_j}{\tau_j}$ (note that since $\tau_i \geq \tau_j$, this also implies $w_i > w_j$). Then, consider alternative weight vector $\tw_i = w_i - \varepsilon$, $\tw_j = w_j + \varepsilon$, and $\tw_k = w_k$ for all $k \neq i,j$. First, we note that $\w$ is feasible. Indeed, $\sum_i \tw_i = 1$, $\tw_i \eta < w_i \eta \leq \tau_i$, and for $\varepsilon$ small enough,
\[
\frac{\tw_j}{\tau_j} \eta 
< \frac{\tw_i}{\tau_i} \eta
< \frac{w_i}{\tau_i}  \eta
\leq 1.
\]
$\tilde{\w}$ also has lower variance than $\w$, by a similar calculation as before:
\begin{align*}
\sum_{i =1}^n w_i^2 - \sum_{i =1}^n \tw_i^2
&=  w_i^2 - \tw_i^2 + w_j^2 - \tw_j^2
\\&= w_i^2 - (w_i - \varepsilon)^2 + w_j^2 - (w_j + \varepsilon)^2 
\\&= w_i^2  - w_i^2  - \varepsilon^2 + 2 w_i \varepsilon + w_j^2  - w_j^2 - 2 w_j \varepsilon - \varepsilon^2  
\\&=2 \varepsilon \left(w_i - w_j - \varepsilon\right)
\\&> 0
\end{align*}
where the last step follows from $w_i > w_j$ and $\varepsilon$ small enough. This is a contradiction.

\subsubsection{Proof of Claim~\ref{clm:simple_optim_eta}}\label{app:simple_optim_eta}
We have that 
\begin{align*}
f'(\eta) 
&= \frac{2 \sigma^2  \sum_{i = t+1}^n \tau_i }{t \eta^2} \left(1 - \frac{1}{\eta}  \sum_{i = t+1}^n \tau_i \right) - \frac{2\sigma^2}{\eta^3}  \sum_{i = t+1}^n \tau_i^2 - \frac{4}{\eta^3}
\\&= \frac{2 \sigma^2  \sum_{i = t+1}^n \tau_i }{t \eta^3}\left(\eta -  \sum_{i = t+1}^n \tau_i  \right) - \frac{2\sigma^2}{\eta^3}  \sum_{i = t+1}^n \tau_i^2 - \frac{4}{\eta^3}
\\&=\frac{2 \sigma^2}{\eta^3} \left(\frac{\eta}{t} \sum_{i = t+1}^n \tau_i  - \frac{1}{t} \left(\sum_{i = t+1}^n \tau_i\right)^2 - \sum_{i = t+1}^n \tau_i^2 - \frac{2}{\sigma^2}\right).
\end{align*}
In turn, $f'(\eta) < 0$ if and only if $\eta < \frac{\left(\sum_{i = t+1}^n \tau_i\right)^2 + t \sum_{i = t+1}^n \tau_i^2 + 2t/\sigma^2}{\sum_{i = t+1}^n \tau_i}$ and $f'(\eta) > 0$ if and only if $\eta < \frac{\left(\sum_{i = t+1}^n \tau_i\right)^2 + t \sum_{i = t+1}^n \tau_i^2 + 2/\sigma^2}{\sum_{i = t+1}^n \tau_i}$. Finally, note that $f'(\eta) = 0$ can be written as 
\[
\frac{\eta}{t} \sum_{i = t+1}^n \tau_i  - \frac{1}{t} \left(\sum_{i = t+1}^n \tau_i\right)^2 - \sum_{i = t+1}^n \tau_i^2 - \frac{2}{\sigma^2} = 0.
\]
This immediately leads to 
\[
\eta^* = \frac{\left(\sum_{i = t+1}^n \tau_i\right)^2 + t \sum_{i = t+1}^n \tau_i^2 + \frac{2t}{\sigma^2}}{\sum_{i = t+1}^n \tau_i}.
\]

\end{document}